\newtheorem{pp}{Proposition}
\newtheorem{lemma}{\textbf{Lemma}}
\newtheorem{myDef}{\textbf{Definition}}
\newtheorem{remark}{\textbf{Remark}}
\begin{document}
%\baselineskip 4.2ex

%\title{Secure Users Oriented Downlink MISO NOMA
%\thanks{$^\dagger$ The authors are with the
%School of Electronics and Information Engineering, and also with the Ministry
%of Education Key Lab for Intelligent Networks and Network Security, Xi'an Jiaotong
%University, Xi'an, 710049, Shaanxi, P. R. China. Email: {\tt
%xjbswhm@gmail.com; jcx8008208820@stu.xjtu.edu.cn; yangq36@gmail.com}.
%}
%\thanks{$^\ddagger$ The author is with the School of Electrical and Information Engineering, Jinan University (Zhuhai Campus), Zhuhai 519070, P. R. China. Email: {\tt theo\_tsiftsis@jnu.edu.cn}.
%}
%\author{ Hui-Ming Wang$^\dagger$, \emph{Senior Member, IEEE}, Xu Zhang$^\dagger$, Qian Yang$^\dagger$, and Theodoros A. Tsiftsis$^\ddagger$, \emph{Senior Member, IEEE}
%}
%}
%\hspace{0.05in}

%\title{Secure Users Oriented Downlink MISO NOMA}
%\thanks{$^\dagger$}
%\author{Hui-Ming Wang, \emph{Senior Member, IEEE}, Xu Zhang, Qian Yang, and Theodoros A. Tsiftsis, \emph{Senior Member, IEEE}}
%\maketitle

\title{Secure Users Oriented Downlink MISO NOMA}
\author{
	Hui-Ming~Wang,~\IEEEmembership{Senior Member,~IEEE,}~%
	Xu~Zhang,~
	Qian~Yang,~\\
    and~Theodoros~A.~Tsiftsis,~\IEEEmembership{Senior Member,~IEEE}%
\thanks{Hui-Ming Wang, Xu Zhang, and Qian Yang are with the
School of Electronics and Information Engineering, and also with the Ministry
of Education Key Lab for Intelligent Networks and Network Security, Xi'an Jiaotong
University, Xi'an, 710049, Shaanxi, P. R. China. Email: {\tt
xjbswhm@gmail.com; jcx8008208820@stu.xjtu.edu.cn; yangq36@gmail.com}.}%
\thanks{Theodoros A. Tsiftsis is with the School of Electrical and Information Engineering, Jinan University (Zhuhai Campus), Zhuhai 519070, P. R. China. Email: {\tt theo\_tsiftsis@jnu.edu.cn}.}%
	}
\maketitle

\begin{abstract}

This paper proposes a secure users oriented multiple-input and single-output (MISO) non-orthogonal multiple access (NOMA) downlink transmission scheme, where multiple legitimate users are categorized as quality of service (QoS)-required users (QU) and the security-required users (SU) overheard by a passive eavesdropper. The basic idea is to exploit zero-forcing beamforming (ZFBF) to cancel interference among SUs, and then several QUs are efficiently scheduled based on the obtained beamforming vectors to divide the legitimate users into several user clusters, in such a way that the QUs could share the concurrent transmissions and help to interfere with the eavesdropper to enhance SU secrecy. The goal is to maximize the achievable minimum secrecy rate (MSR) and sum secrecy rate (SSR) of all SUs, respectively, subject to the secrecy outage probability (SOP) constraint of each SU and the QoS constraint of each QU.
To provide a comprehensive investigation
we consider two extreme cases that the eavesdropper has perfect multiuser detection ability (lower bound of secrecy) or does not have multiuser detection ability (upper bound of secrecy).  In the lower bound case, the Dinkelbach algorithm and the monotonic optimization (MO)-based outer polyblock approximation algorithm are proposed to solve the max-min secrecy rate (MMSR) and max-sum secrecy rate (MSSR) problems, respectively. As for the upper bound case, an alternative optimization (AO)-based algorithm is proposed to solve the two non-convex problems. Finally, the superiority of the proposed cases to the conventional orthogonal multiple access (OMA) one is verified by numerical results.
\end{abstract}
%we investigate a multiple-input and single-output (MISO) non-orthogonal multiple access (NOMA) downlink system, where there are multiple legitimate users categorized as security-required users and quality of service (QoS)-required users and a passive external eavesdropper overhearing the targeted signals of all security-required users. These legitimate users are divided into several user clusters and zero-forcing beamforming (ZFBF) is adopted with a dynamic user scheduling scheme to alleviate the inter-cluster interference. There exists a passive external eavesdropper overhearing the targeted signals of all security-required users. Our aim in this paper is to maximize the achievable minimum secrecy rate (MSR) and secrecy sum rate (SSR) of all the security-required users, respectively, subject to the secrecy outage probability (SOP) constraint of each security-required user and the QoS constraint of each QoS-required user. According to different assumptions of detection ability of the external eavesdropper, both the lower and upper bounds of the secrecy performance are provided. We exploit the Dinkelbach algorithm and monotonic optimization (MO) based outer polyblock approximation algorithm to obtain the global optimal solution for the lower bound case. As for the upper bound case, we exploit an alternative optimization (AO) based algorithm to obtain sub-optimal solutions. Finally, the superiority of the proposed schemes to the conventional orthogonal multiple access (OMA) one is verified by numerical results.

\begin{IEEEkeywords}
Convex optimization, non-orthogonal multiple access, power allocation, physical layer security.
\end{IEEEkeywords}

%\IEEEpeerreviewmaketitle

\section{Introduction}
%Nowadays, the increasing demand of data traffic and the explosion of emerging services, such as the Internet of Things (IoT), have brought challenging requirements for 5G cellular communication systems, including lower latency, massive connectivity and the support of diverse quality of service (QoS) \cite{ReferenceSurvey1}.
Nowadays, the spectral efficiency becomes one of the key challenges to handle the increasing demand of data traffic. Moreover, due to the explosion of emerging services such as Internet of Things (IoT), 5G cellular communication systems need to support massive connectivity to meet the demand for low latency devices and diverse service types \cite{ReferenceSurvey1}. To that end, non-orthogonal multiple access (NOMA) has been proposed as a promising technique to exploit limited resources more efficiently in a non-orthogonal manner \cite{ReferenceNOMA}. An existing dominant NOMA scheme is to serve multiple users simultaneously via power domain multiplexing. In the power-domain NOMA scheme, the targeted information-bearing signals of multiple users are superimposed for transmission and efficient multiuser detection technology is exploited at the receiver via successive interference cancellation (SIC). Generally, the multiple NOMA users should be ordered according to their different channel conditions to facilitate the effective SIC process. In terms of the information theory, the sum capacity superiority of NOMA over OMA becomes more significant when the channel conditions of the concurrently served users become more distinguishable. Therefore, it is beneficial to schedule NOMA users with different channel conditions.  Moreover, users can also be distinguished according to their different QoS requirements \cite{ReferenceQoSordering1, ReferenceQoSordering2}.
%Particularly, assuming some users are to be served for small packet transmission requiring quickly connected with relative low data rates, while other users are to be served opportunistically. Under this condition, the users to be opportunistically served can detect their expected signals by using SIC while these signals are regraded as interference when the users with higher priority directly detect their own expected signals. %In this situation, NOMA principles can be generalized to serve users with similar channel conditions.
The superiority of NOMA to conventional orthogonal multiple access (OMA) has been studied in the perspectives of spectral efficiency, energy efficiency, and fairness \cite{Reference1, Reference2, Reference3}.

Due to the broadcast nature of wireless communications, the confidential message is vulnerable to eavesdropping, which brings about the security challenge of wireless transmission. Physical layer security (PLS) has been regarded as a key complementary technology to safeguard communication security by exploiting the inherent random characteristics of  wireless physical-layer channels \cite{ReferencePLS1, ReferencePLS2, ReferencePLS3}. It is known that PLS mainly relies on the difference of the receive signal-to-noise ratios (SNRs) between the legitimate users and the eavesdropper. Therefore, introducing interference may be beneficial for secrecy enhancement. It is interesting to observe that the non-orthogonal resource allocation policy will introduce extra inter-user interference according to the NOMA principle, which may improve the PLS performance due to the significant degradation on the receive SNR of the eavesdropper. In this regard, employing the NOMA technology will not only increase the spectral efficiency, but improve the secrecy performance concurrently.

Due to the aforementioned advantages, the PLS topic in NOMA systems has drawn significant attention recently. The problem to maximize the achievable secrecy sum rate (SSR) in a downlink single-input single-output (SISO) NOMA system was firstly studied in \cite{Reference7}. Since then, several analyses and optimization problems of secrecy transmissions in secure downlink SISO NOMA systems have been studied in \cite{ReferenceSISO1, ReferenceSISO2, ReferenceSISO3, ReferenceSISO4, ReferenceSISO5}. Furthermore, using the multiple-antenna technology, the SSR optimization problems were studied in downlink multiple-input single-output (MISO) and multiple-input multiple-output (MIMO) NOMA systems by beamforming and power allocation in \cite{ReferenceMIMO1, ReferenceMIMO2, ReferenceMIMO3, ReferenceMIMO4}. However, the perfect instantaneous CSI of the eavesdroppers is required for the designs and optimizations in the above multi-antenna works, which is difficult to obtain in practice.
%In \cite{ReferenceMIMO5, ReferenceMIMO6} the imperfect CSI of potential eavesdroppers is assumed with the bounded CSI error model.

On the other hand, the assumption that the statistical CSI of eavesdroppers is available has been more widely adopted in current bibliography, and various NOMA scenarios and transmission schemes have been investigated from the perspective of PLS, e.g., mixed multicasting and unicasting \cite{ReferenceMIMO8, ReferenceMIMO9}, transmit antenna selection (TAS) \cite{ReferenceMIMO10, ReferenceMIMO11}, artificial noise (AN) \cite{ReferenceMIMO12}, and large-scale networks \cite{ReferenceMIMO7}. However, all the above works only focus on the secrecy performance analysis without optimizing the power allocation that by proper design can further improve the secrecy performance. Recently, Chen \emph{et al.} \cite{ReferenceMIMO15} investigated the ergodic secrecy rate of massive multi-antenna NOMA systems and optimized the power allocation for security enhancement. However, the authors assumed that the eavesdroppers did not have the capability of multi-user detection, which may underestimate their wiretapping abilities and lead to an over optimistic secrecy performance. Furthermore, the conclusions are based on the approximated lower bound of the ergodic SSR by exploiting the characteristics of a massive number of antennas, which is a special case that cannot apply to the general scenario.

Based on the above discussion, we investigate a novel downlink secure MISO-NOMA system with different categories of users. We consider that there are two categories of  legitimate users according to their different service requirements: 1) Users with secrecy requirements during the transmissions; 2) Users with only quality of service (QoS) requirements. Hereinafter these two kinds of users are called as security-required users (SU) and QoS-required users (QU), respectively. This scenario commonly exists in many practical applications. For example, some users with high secrecy priority (government officers, etc) will buy the additional PLS service from operators while some others may not. Another example is IoT applications such as the Internet of vehicles, where some sensors require confidential data (e.g., states of vehicle engines) while the data for some other sensors may have low or no secrecy requirements (e.g., temperature).
%According to this user categorization, it is meaningful to study the secure performance of some SUs with confidential message transmission overheard by the external eavesdropper based on realizing the requirements of other RUs, especially when these RUs require relatively low data rates such as IoT receivers.

Under the above model, we propose to group the legitimate users into multiple clusters with one SU and multiple QUs in each cluster. The SU in each cluster has the highest priority so the secrecy requirement should be satisfied and the secrecy throughput should be optimized. Additionally, the concurrent transmission of the QU signals in a NOMA manner introduces interferences to enhance the SU's security under the condition that their QoS requirements are satisfied.
%To fully exploit the spatial DoF and suppress the inter-cluster interference of legitimate users, efficient ZFBF method is adopted.    Since the intra-cluster and inter-cluster interference is harmful for the external eavesdropper to overhear the expected signals of SUs, two cases are considered according to the assumption of detection ability of the external eavesdropper, specifically relying on whether interference can be subtracted when the eavesdropper detects the expected signals. Under the described system model and different eavesdropping cases, the optimization of achievable minimum secrecy rate (MSR) and SSR of SUs are investigated respectively, where each SU has a SOP constraint and each QU has a QoS constraint.

The main contributions of our paper can be summarized as follows:
\begin{enumerate}[1)]
\item Subject to the secrecy outage probability (SOP) constraints of SUs and QoS requirements of QUs, we investigate the beamforming and power allocation schemes for the max-min secrecy rate (MMSR) and max-sum secrecy rate (MSSR) problems, respectively. Effective user scheduling schemes are proposed as well. Furthermore, according to different wiretap capabilities of the eavesdropper, we discuss both cases that the eavesdropper applies perfect SIC and no SIC, which correspond to the lower and upper bounds of the secrecy performance, respectively.

\item For the MMSR problem with a focus on the fairness among all the SUs with different channel conditions, an efficient Dinkelbach algorithm is proposed to obtain the globally optimal solution for the lower bound case, while an alternative optimization (AO)-based algorithm is developed to derive the sub-optimal solution for the upper bound case, respectively.

\item For the MSSR problem aiming to enhance the confidential transmission rate, an efficient outer polyblock approximation based monotonic algorithm is exploited to obtain the globally optimal solution for the lower bound case, while the AO-based algorithm is exploited to derive the sub-optimal solution for the upper bound case, respectively.

\end{enumerate}

%Considering the worst-case that the interference can be subtracted when the eavesdropper detects the expected signals, the efficient Dinkelbach algorithm \cite{Reference 14} and outer polyblock approximation based monotonic algorithm \cite{Reference15} are exploited to derive the global optimal solutions of the MMSR problem and the MSSR problem, respectively.

%Though there have been several works about secure NOMA downlink transmission system design or performance analysis, as far as we know, the secrecy performance optimization combined with user categorization in the cluster-based MISO NOMA downlink system with multiple users in each cluster is still vacant.
%We Recently in \cite{ReferenceMIMO15}, the similar cluster-based multi-user downlink MISO NOMA system with multiple eavesdroppers is studied and the inter-user interference is exploited to degrade the detection ability of eavesdroppers. The contribution of this work is significant, but it should be pointed out that the secrecy performance of all legitimate users is analyzed and optimized in this work, while in the practical scenario these legitimate users usually have their own distinct service requirements, which will lead to different optimization methods and results of power allocation.

The rest of this paper is organized as follows: In Section \uppercase\expandafter{\romannumeral2}, we present the system model and propose the transmission and user scheduling schemes for the considered system. We focus on the MMSR and MSSR problems in Section \uppercase\expandafter{\romannumeral3} and \uppercase\expandafter{\romannumeral4}, respectively. Numerical results are presented in Section \uppercase\expandafter{\romannumeral5} before the conclusion is drawn in Section \uppercase\expandafter{\romannumeral6}.

\emph{Notations:} Boldface lower-case and upper-case letters denote vectors and matrices, respectively. The transpose, conjugate, conjugate transpose, pseudo-inverse, Euclidean norm, Frobenius norm, trace, and the largest eigenvalue of the matrix ${\bf{A}}$ are denoted as ${{\bf{A}}^T}$, ${{\bf{A}}^*}$, ${{\bf{A}}^H}$, ${{\bf{A}}^\dag }$, $\left\| {\bf{A}} \right\|$, ${\left\| {\bf{A}} \right\|_F}$, $Tr\left( {\bf{A}} \right)$, and ${\lambda _{\max }}\left( {\bf{A}} \right)$, respectively. ${\left\{ A \right\}^ + }$ denotes the operation $\max \left\{ {A,0} \right\}$. $CN\left( {0,{\sigma _N^2}} \right)$ denotes zero-mean additive white Gaussian noise (AWGN) with variance $\sigma _N^2$

\section{System Model and the Proposed Scheme}
In this section, we describe the system model, the transmission and user scheduling schemes and the secrecy performance metric.
%Based on the practical user categorization, the cluster-based multiuser MISO-NOMA system with external eavesdropper intercepting the confidential message is proposed and the SOP metric is introduced under the condition that only statistical CSI of the external eavesdropper is available at base station. Under this system model, the efficient transmission scheme and user scheduling policy are adopted to achieve better secrecy performances.
%
\subsection{System Model}
Consider a downlink MISO-NOMA system as shown in Fig. \ref{f_System_model}. Since 5G cellular communication systems need to support massive connectivity due to the explosion of emerging applications such as IoT, the base station (BS) is equipped with $N$ antennas and we make a practical assumption that there exists $M > N$ legitimate users (LUs) equipped with a single antenna. According to distinct service requirements of different kinds of users, the LUs are categorized as SUs with secrecy requirements and QUs with only QoS requirements. There is a passive eavesdropper (Eve) equipped with a single antenna which potentially overhears the targeted messages for all the SUs. We make a practical assumption that the instantaneous CSI of Eve is absent and only its statistical CSI is available. On the other hand, we assume that Eve knows the CSI of itself.

Based on the described user categorization, our goal is to achieve better secrecy performance of the SUs and to simultaneously satisfy the QoS requirements of the QUs, and, thus, the spectral efficiency of the system can be improved. To achieve this goal, the basic idea is to group the LUs into multiple clusters with one SU pairing with several QUs in each cluster. The detailed transmission and user scheduling schemes are discussed in Section \uppercase\expandafter{\romannumeral2}-B and Section \uppercase\expandafter{\romannumeral2}-C, respectively.

\begin{figure}[t]
\centering
\includegraphics[width=8cm,height=5cm]{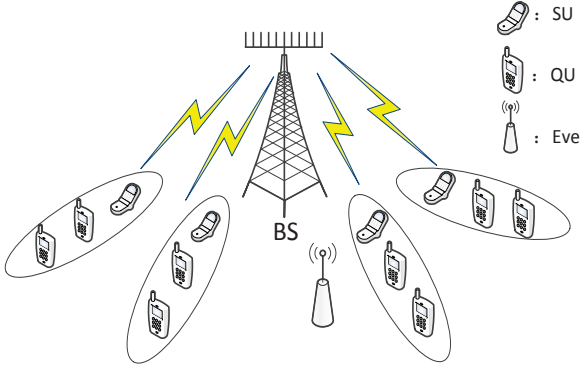}
\caption{The considered downlink cellular network with a multi-antenna BS, multiple single-antenna legitimate users categorized as SUs and QUs, and a passive eavesdropper Eve. The NOMA technology is exploited to simultaneously serve multiple users under each spatial beam.}\label{f_System_model}
\end{figure}

\subsection{Transmission Scheme}
Under the proposed cluster-based downlink NOMA system model, there is one SU and multiple QUs in each cluster. The BS fully exploits the spatial degree of freedom (DoF) to design the distinct beamforming vector for each user cluster, which means that the scheduled LUs in each cluster are served by a common beam in a NOMA manner \cite{ReferenceMIMO15, ReferenceSDMA}. To fully exploit the spatial DoF, we make an additional assumption that the number of LUs is $M = N \times K$ and $M$ LUs are divided into $G = N$ clusters with $K$ users in each cluster.
Since available resources are shared in the NOMA scheme, there may exist both intra-cluster interference and inter-cluster interference between LUs. Since the SU in each cluster has the highest priority to achieve secrecy transmission, we propose a zero-forcing beamforming (ZFBF) scheme with respect to all the SUs \cite{ReferenceZFBF}, i.e., multiple SUs in different clusters have no inter-cluster interference. This scheme has the following two merits:
\begin{enumerate}[1)]
\item The ZFBF scheme alleviates the effect of inter-cluster interference of the SUs, which is beneficial to enhance the signal-to-interference-and-noise ratios (SINRs) and, hence, the secrecy performance can be improved.
\item In each cluster, the SU and the QUs share the available resources in a non-orthogonal way. Except for the spectral efficiency enhancement, the intra-cluster interference in each cluster will confuse the potential Eve in order to further improve the secrecy throughput.
\end{enumerate}

All the wireless channels are assumed to experience quasi-static Rayleigh block fading. Specifically, the independent $N \times 1$ channel coefficient vector from the BS to the $k$-th LU in the $g$-th cluster is denoted as ${{\bf{h}}_{g,k}}\sim CN\left( {0,{{\bf{I}}_N}} \right)$. Without loss of generality, the SU is represented by the first user in each cluster. Define
\begin{equation}\label{e1}
  {{\bf{H}}_1} = {\left[ {{{\bf{h}}_{1,1}},{{\bf{h}}_{2,1}},...,{{\bf{h}}_{G,1}}} \right]^T},
\end{equation}
%where the independent $N \times 1$ channel coefficient vector from BS to the $k$-th legitimate user in the $g$-th cluster is denoted by ${{\bf{h}}_{g,k}} = {{\bf{g}}_{g,k}}{d_{g,k}}^{ - \frac{a}{2}}$, where ${d_{g,k}}$ represents the distance, ${{\bf{g}}_{g,k}}$ denotes the Rayleigh fading channel gain, and $a$ is the path loss exponent. The quasi-static block fading model is adopted such that the channel coefficient vectors remain constant during once transmission and change independently from one block to another. Employing the ZFBF scheme, we have
and then by employing the ZFBF scheme, the beamforming matrix is given as
\begin{equation}\label{e2}
  {\bf{\tilde W}} = \left[ {{{{\bf{\tilde w}}}_1},{{{\bf{\tilde w}}}_2},...,{{{\bf{\tilde w}}}_G}} \right] = {{\bf{H}}_1}^*{\left( {{{\bf{H}}_1}{{\bf{H}}_1}^*} \right)^{ - 1}}.
\end{equation}
We define ${{\bf{w}}_g} = {{{{{\bf{\tilde w}}}_g}} \mathord{\left/
 {\vphantom {{{{{\bf{\tilde w}}}_g}} {\left\| {{{{\bf{\tilde w}}}_g}} \right\|}}} \right.
 \kern-\nulldelimiterspace} {\left\| {{{{\bf{\tilde w}}}_g}} \right\|}}$ as the $N \times 1$ normalized beamforming vector for the $g$-th cluster. By applying the NOMA protocol, the superimposed downlink signal transmitted by the BS is formulated as
\begin{equation}\label{e3}
  {\bf{x}} = \sum\limits_{g = 1}^G {{{\bf{w}}_g}{x_g}}  = \sum\limits_{g = 1}^G {{{\bf{w}}_g}\left( {\sum\limits_{k = 1}^K {\sqrt {{\alpha _{g,k}}P} {s_{g,k}}} } \right)},
\end{equation}
where ${x_g}$ is the transmitted superposition signal for the $g$-th cluster, $P$ is the total power consumption at the BS, ${\alpha _{g,k}}$ is the power allocation coefficient for the $k$-th user in the $g$-th cluster, and ${s_{g,k}}$ is the targeted signal for the $k$-th user in the $g$-th cluster which satisfies $E\left\{ {{{\left| {{s_{g,k}}} \right|}^2}} \right\} = 1$.

\subsection{User Scheduling-Based SIC}
As described in the above two subsections, $G$ clusters are grouped, and one SU and $K-1$ QUs in each cluster are served by a common beam. Under this condition, since a random division may lead to a poor performance of our proposed scheme, the effective user scheduling should be carefully designed to achieve better secrecy performance and enhance the spectral efficiency. Specifically, $G$ SUs are firstly scheduled and the corresponding beamforming vectors are designed in (\ref{e2}) according to their instantaneous CSI. Then based on the obtained beamforming vectors ${{\bf{w}}_g}$, $K-1$ QUs with similar channel fading vectors to the SU are scheduled in each cluster to alleviate the effect of inter-cluster interference. In addition, since the SU in each cluster has the highest priority, the QUs are scheduled to simultaneously ensure that the SU has the channel superiority in each cluster for secrecy performance enhancement, namely,
%To ensure that the effective channel gain of the SU is larger than those of the scheduled QUs in each cluster, we impose a condition that
\begin{equation}\label{e5}
  {\left| {{\bf{h}}_{g,1}^H{{\bf{w}}_g}} \right|^2} > {\left| {{\bf{h}}_{g,2}^H{{\bf{w}}_g}} \right|^2} >  \cdot  \cdot  \cdot  > {\left| {{\bf{h}}_{g,K}^H{{\bf{w}}_g}} \right|^2}.
\end{equation}
where the ordering of the effective channel gains of the QUs in each cluster is assumed without loss of generality. Based on the above user scheduling policy, the receive signal of the $k$-th user in the $g$-th cluster is expressed as
\begin{equation}\label{e4}
  {y_{g,k}} = {\bf{h}}_{g,k}^H\sum\limits_{i = 1}^G {{{\bf{w}}_i}\left( {\sum\limits_{j = 1}^K {\sqrt {{\alpha _{i,j}}P} {s_{i,j}}} } \right)}  + {n_{g,k}},
\end{equation}
where ${n_{g,k}} \sim CN\left( {0,{\sigma _N^2}} \right)$ is the additive noise.

According to the NOMA principle, strong users in each cluster have the ability to avoid the interference caused by the weaker users and the SIC method is exploited. Specifically, the $n$-th user will firstly detect and eliminate all the targeted signals of the $m$-th user ($m > n$) in a successive way, and then the targeted signal of the $n$-th user is detected by treating the remaining other users' signals as noise. In our case, the SINR of each LU in the $g$-th cluster is
%according to (\ref{e5})
\begin{equation}\label{e6}
  {\rm{SIN}}{{\rm{R}}_{g,1}} = {\left| {{\bf{h}}_{g,1}^H{{\bf{w}}_g}} \right|^2}\rho {\alpha _{g,1}},
\end{equation}
\begin{equation}\label{e7}
  {\rm{SIN}}{{\rm{R}}_{g,k}} = \frac{{{S_{g,k}}}}{{I_1^{g,k} + I_2^{g,k} + 1}},\;k = 2,...,K,
\end{equation}
where $\rho  = \frac{P}{{{\sigma _N^2}}}$ denotes the average transmit SNR at the BS, ${S_{g,k}} = {\left| {{\bf{h}}_{g,k}^H{{\bf{w}}_g}} \right|^2}\rho {\alpha _{g,k}}$ is the receive power of the desired signal, $I_1^{g,k} = {\left| {{\bf{h}}_{g,k}^H{{\bf{w}}_g}} \right|^2}\sum\nolimits_{j = 1}^{k - 1} {\rho {\alpha _{g,j}}} $ accounts for the partially cancelled intra-cluster interference after SIC, $I_2^{g,k} = \sum\nolimits_{i = 1,i \ne g}^G {{{\left| {{\bf{h}}_{g,k}^H{{\bf{w}}_i}} \right|}^2}\sum\nolimits_{j = 1}^K {\rho {\alpha _{i,j}}} } $ denotes the inter-cluster interference, and the third term is the power of normalized AWGN noise.

\begin{remark}
  \textnormal{Though the optimal decoding order or user pairing for NOMA has been investigated in \cite{ReferenceSISO2, ReferenceOptimal1, ReferenceOptimal2}, these works are generally focused on the single-antenna scenarios. In contrast, the optimal design in our considered multi-antenna user cluster based system is quite complicated. Due to the high complexity of the optimal design, we have discussed a specific and novel secure users oriented transmission and user scheduling scheme and the simulation results in Section \uppercase\expandafter{\romannumeral5} demonstrate that under some certain conditions, the secrecy performance of our proposed scheme is better than that of traditional beamforming design only. The optimal design of the problem is beyond the scope of this paper and will be investigated in our future work.}
\end{remark}

\subsection{Eavesdropping Model}
If the targeted user is the SU in the $g$-th cluster, the receive signal at Eve is formulated as
\begin{equation}\label{e8}
  {y_e^g} = s_{e,1}^g + s_{e,2}^g + s_{e,3}^g + {n_e},
\end{equation}
where $s_{e,1}^g = {\bf{h}}_e^H{{\bf{w}}_g}\sqrt {{\alpha _{g,1}}P} {s_{g,1}}$ denotes the desired signal of Eve, $s_{e,2}^g = {\bf{h}}_e^H{{\bf{w}}_g}\sum\nolimits_{k \ne 1}^K {\sqrt {{\alpha _{g,k}}P} {s_{g,k}}} $ and $s_{e,3}^g = {\bf{h}}_e^H\sum\nolimits_{i = 1,i \ne g}^G {{{\bf{w}}_i}\left( {\sum\nolimits_{k = 1}^K {\sqrt {{\alpha _{i,k}}P} {s_{i,k}}} } \right)} $ are the interference from the signals of the intra-cluster users and inter-cluster users, respectively, and ${n_{e}} \sim CN\left( {0,{\sigma _N^2}} \right)$ is the additive noise. ${{\bf{h}}_e} \sim CN\left( {0,{{\bf{I}}_N}} \right)$ is the $N \times 1$ channel coefficient vector from the BS to Eve, and the statistical CSI of Eve is available at the BS according to Section \uppercase\expandafter{\romannumeral2}-A.

From (\ref{e8}), both intra-cluster and inter-cluster interferences exist when Eve detects the targeted signals of SUs. In order to provide a comprehensive investigation on the impact of such interferences on the secrecy performance, two extreme cases are considered according to different assumptions about the multiuser detection ability of Eve.
\begin{enumerate}[1)]
\item \textbf{Lower Bound of secrecy performance:} In this case, we consider the worst-case scenario that Eve has powerful multiuser detection ability, which is a commonly used assumption on the detection ability of Eve in the existing NOMA PLS literatures \cite{Reference7, ReferenceSISO1, ReferenceSISO2, ReferenceSISO3}. Specifically, Eve can distinguish the multiuser data streams from BS and thus decode each target signal only under the impact of noise without interference, which leads to the theoretical lower bound of the secrecy performance with respect to the detection abilities of Eve. The SNR of detecting the targeted signal of the SU in the $g$-th cluster is formulated as
\begin{equation}\label{e9}
  {\rm{SINR}}_g^{\left( e \right)} = {\left| {{\bf{h}}_e^H{{\bf{w}}_g}} \right|^2}\rho {\alpha _{g,1}}.
\end{equation}
\item \textbf{Upper Bound of secrecy performance:} In this case, we assume that Eve has no multiuser detection ability. Under this condition, Eve detects the targeted signal under both the AWGN noise and all the interferences. The SINR of detecting the targeted signal of the SU in the $g$-th cluster is formulated as
\begin{equation}\label{en9}
  {\rm{SINR}}_g^{\left( e \right)} = \frac{{S_g^{\left( e \right)}}}{{{\rm{I}}_{g,1}^{\left( e \right)} + {\rm{I}}_{g,2}^{\left( e \right)} + 1}},
\end{equation}
where $S_g^{\left( e \right)} = {\left| {{\bf{h}}_e^H{{\bf{w}}_g}} \right|^2}\rho {\alpha _{g,1}}$ is the receive power of the targeted signal, ${\rm{I}}_{g,1}^{\left( e \right)} = {\left| {{\bf{h}}_e^H{{\bf{w}}_g}} \right|^2}\sum\nolimits_{k \ne 1}^K {\rho {\alpha _{g,k}}} $ denotes the intra-cluster interference and ${\rm{I}}_{g,2}^{\left( e \right)} = \sum\nolimits_{i = 1,i \ne g}^G {{{\left| {{\bf{h}}_e^H{{\bf{w}}_i}} \right|}^2}\sum\nolimits_{k = 1}^K {\rho {\alpha _{i,k}}} } $ accounts for the inter-cluster interference.
\end{enumerate}

%We have the following four remarks:
\begin{remark}
  \textnormal{Since Eve is a passive eavesdropper overhearing the confidential transmissions, it is general and practical that we don't make a specific assumption on the location of Eve. In addition, it should be pointed out that our aforementioned secure users oriented transmission and user scheduling scheme is essentially channel-dependent rather than location-dependent, which also indicates that it does not matter whether Eve is located in the cluster or not.}
\end{remark}

\begin{remark}
  \textnormal{In practice, the detection capability of Eve always lies between the perfect cancellation and no cancellation of interferences. Therefore, the cases in which the intra-cluster interference and the inter-cluster interference can be partially subtracted and Eve detects the targeted signal under the impact of noise and residual interference are more practical. We note that it could be considered as a special case of the upper bound case, since the receive SINR of each SU takes the same form as (\ref{en9}) with only different terms of interference in the denominator. We will see later that our proposed optimization algorithms are also applicable to these cases.}
\end{remark}

\begin{remark}
  \textnormal{On the other hand, the SNR in (\ref{e9}) could be mathematically viewed as a special case of the SINR in (\ref{en9}) when ${\rm{I}}_{g,1}^{\left( e \right)}=0$ and ${\rm{I}}_{g,2}^{\left( e \right)}=0$, which implies that the optimization algorithm studied for the upper bound case also applies to the lower bound case. However, we will see in the following that with (\ref{e9}), the globally optimal solution could be achieved while this is not the case when dealing with (\ref{en9}). That is why we will discuss the lower bound case exclusively.}
\end{remark}

\begin{remark}
  \textnormal{Under some practical scenarios that Eve has partial detection capability such as SIC ability, the corresponding SINR has the same form as (10), where ${\rm{I}}_{g,1}^{\left( e \right)}$ and ${\rm{I}}_{g,2}^{\left( e \right)}$ represent the residual intra-cluster and inter-cluster interference after partial interference cancellation. As a result, all the cases that Eve has different partial detection capabilities can be considered as special ones of the upper bound case and our optimization algorithms are also applicable to these cases.}
\end{remark}

\subsection{Secrecy Performance Metrics}
If the instantaneous CSI of the wiretap channel is available at BS, the achievable secrecy rate ${R_{s,g}}$ of the SU in the $g$-th cluster can be defined as
\begin{equation}\label{e10}
  {R_{s,g}} \le {\left\{ {C\left( {{\rm{SIN}}{{\rm{R}}_{g,1}}} \right) - C\left( {{\rm{SINR}}_g^{\left( e \right)}} \right)} \right\}^ + },
\end{equation}
where $C\left( \rho  \right) \buildrel \Delta \over = {\log _2}\left( {1 + \rho } \right)$. However, since the instantaneous CSI of ${{\bf{h}}_e}$ is not available and is treated as a random variable as aforementioned in Section \uppercase\expandafter{\romannumeral2}-A, using (\ref{e10}) cannot directly determine the exact secrecy rate. Under this condition, given a constant $R_{s,g}$ (in the optimization problem $R_{s,g}$ is treated as a optimization variable) the SOP is introduced as
\begin{equation}\label{e11}
  {P_g} = \Pr \left\{ {C\left( {{\rm{SINR}}_g^{\left( e \right)}} \right) > C\left( {{\rm{SIN}}{{\rm{R}}_{g,1}}} \right) - {R_{s,g}}} \right\}.
\end{equation}
Based on the theory of the wiretap code \cite{ReferenceSISO2},  the positive difference of the maximum codeword transmission rate $C\left( {{\rm{SIN}}{{\rm{R}}_{g,1}}} \right)$ and the secrecy rate ${R_{s,g}}$ is the redundant rate cost to provide security against Eve, and SOP is defined as the probability that the Eve's channel capacity exceeds the
redundant rate as in (\ref{e11}).

As mentioned before, our goal is to maximize the secrecy rate $R_{s,g}$ under the SOP constraints of the SUs and the QoS constraints of the QUs. In the NOMA system, we consider two kinds of performance metrics, namely, the minimum secrecy rate (MSR) and SSR of all the SUs. Since the beamforming vectors have already been determined according to the zero-forcing criteria, our aim is to optimize the power allocation strategy. The specific problems will be introduced and efficiently solved in Sections \uppercase\expandafter{\romannumeral3} and \uppercase\expandafter{\romannumeral4}, respectively.
%one is to maximize the minimum secrecy rate (MMSR) of all SUs, and the other is to maximize the sum secrecy rate (MSSR) of all SUs.

\section{Max-Min Achievable Secrecy Rate}
In this section, we design the optimal power allocation strategies in both the lower and upper bound cases respectively. In particular, we maximize the minimum secrecy rate of the SUs subject to the SOP constraints of SUs, the QoS requirements of QUs, and the total available transmit power constraint. The objective of MMSR reflects the fairness of the confidential transmission considering different CSI conditions of SUs.

\subsection{Problem Formulation}
According to the aforementioned requirements, the MMSR problem is formulated as follows:
\begin{subequations}\label{P1}
\begin{align}
&\mathop {\max }\limits_{{\bf{a}},{{R_{s,g}}}} \mathop {\min }\limits_{1 \le g \le G} {\left\{ {{R_{s,g}}} \right\}^ + }\label{P1e1}\\
&\;\;\;\textrm{s.t.}\;\;{P_g} \le \varepsilon ,\;g = 1,2,...G\label{P1e2} \\
&\;\;\;\;\;\;\;\;\;{\rm{SIN}}{{\rm{R}}_{g,k}}\! \ge\! {r_{g,k}},\;g\! = \!1,2,...,G;\;k = 2,...,K\label{P1e3}\\
&\;\;\;\;\;\;\;\;\sum\limits_{g = 1}^G {\sum\limits_{k = 1}^K {{\alpha _{g,k}}} }  \le 1,\label{P1e4}
\end{align}
\end{subequations}
where ${\bf{a}} = \left\{ {{\alpha _{g,k}},g = 1,2,...,G,k = 1,2,...,K} \right\}$ represents the set of all the power allocation coefficients. Constraint (\ref{P1e2}) includes the SOP constraints of all the SUs, where $\varepsilon $ is the maximum allowable SOP threshold. Constraint (\ref{P1e3}) includes the QoS requirements of all the QUs where $r_{g,k}$ is the SINR threshold, and (\ref{P1e4}) is the total power consumption constraint.

According to (\ref{e11}), ${{R_{s,g}}}$ should be lower than ${C\left( {{\rm{SIN}}{{\rm{R}}_{g,1}}} \right)}$ in order to satisfy the required SOP constraint, and thereby can be expressed as the form of ${C\left( {{\rm{SIN}}{{\rm{R}}_{g,1}}} \right)}$ subtracting a positive redundancy term. We introduce new slack variables ${\bf{z}} = \left\{ {{z_g} \ge 0,\;g = 1,2,...,G} \right\}$ and directly transform the SOP as
\begin{equation}\label{e12}
  {P_g} = \Pr \left\{ {{\rm{SINR}}_g^{\left( e \right)} > {z_g}} \right\},
\end{equation}
where ${z_g}$ satisfies the following inequality
\begin{equation}\label{e14}
  {R_{s,g}} \ge {\log _2}\left( {1 + {\rm{SIN}}{{\rm{R}}_{g,1}}} \right) - {\log _2}\left( {1 + {z_g}} \right).
\end{equation}
%It can be observed from (\ref{e14}) that the actually achieved secrecy rate can be expressed larger than the subtraction of two rate terms, and the latter term just represents the target secrecy rate redundancy to satisfy the compromised SOP constraint. Then,
After transforming problem (\ref{P1}) into its epigraph form \cite{ReferenceConvex} and substituting (\ref{e6}), (\ref{e7}), (\ref{e12}) and (\ref{e14}) into the transformed problem, the MMSR problem can be reformulated as
\begin{subequations}\label{P2}
\begin{align}
&\mathop {\max }\limits_{{\bf{a}},{\bf{z}}} \mathop {\min }\limits_{1 \le g \le G} \left\{ {{{\log }_2}\left( {\frac{{1 + {{\left| {{\bf{h}}_{g,1}^H{{\bf{w}}_g}} \right|}^2}\rho {\alpha _{g,1}}}}{{1 + {z_g}}}} \right)} \right\}\label{P2e1}\\
&\;\;\;\textrm{s.t.}\;\Pr \left\{ {{\rm{SINR}}_g^{\left( e \right)} > {z_g}} \right\} \le \varepsilon ,\;g = 1,2,...,G\label{P2e2}\\
&\;\;\;\;\;\;\;\;\;\frac{{{S_{g,k}}}}{{I_1^{g,k} + I_2^{g,k} + 1}} \ge {r_{g,k}},\;\;k = 2,...,K,\; \textrm{and} \ (\textrm{\ref{P1e4}}).\label{P2e3}
\end{align}
\end{subequations}
We note that problem (\ref{P2}) is equivalent to problem (\ref{P1}). On one hand, since the left-hand side of constraint (\ref{P2e2}) is a monotonically decreasing function of ${{z_g}}$, this constraint is active at the optimum to reduce the redundancy term ${\log _2}\left( {1 + {z_g}} \right)$. Under this condition, by comparing the expressions of SOP in (\ref{e11}) and (\ref{e12}), we can obtain that the inequality (\ref{e14}) is active and thus the equivalence is proved. On the other hand, the ${\left\{  \cdot  \right\}^ + }$ operation can be ignored due to the fact that each term in the objective function must be non-negative at optimality. This is because if ${R_{s,g}} < 0$ for any $g$, we can always stop transmitting the desired signal for the SU in the $g$-th cluster.

According to Section \uppercase\expandafter{\romannumeral2}-C, since ${S_{g,k}} = {\left| {{\bf{h}}_{g,k}^H{{\bf{w}}_g}} \right|^2}\rho {\alpha _{g,k}}$, $I_1^{g,k} = {\left| {{\bf{h}}_{g,k}^H{{\bf{w}}_g}} \right|^2}\sum\nolimits_{j = 1}^{k - 1} {\rho {\alpha _{g,j}}} $ and $I_2^{g,k} = \sum\nolimits_{i = 1,i \ne g}^G {{{\left| {{\bf{h}}_{g,k}^H{{\bf{w}}_i}} \right|}^2}\sum\nolimits_{j = 1}^K {\rho {\alpha _{i,j}}} } $ are linear functions of power allocation coefficients, constraint (\ref{P2e3}) can be transformed as ${S_{g,k}} \ge {r_{g,k}}\left( {I_1^{g,k} + I_2^{g,k} + 1} \right)$, which is an inequality constraint composed of linear functions. Additionally, constraint (\ref{P1e4}) is also a linear inequality constraint with the sum of all the power allocation coefficients in the left-hand side. However, due to the linear-fractional form as well as the $\min \left(  \cdot  \right)$ operation in the objective function, and the probabilistic constraint in (\ref{P2e2}), problem (\ref{P2}) remains difficult to be directly solved.

\subsection{Proposed Method for the Lower Bound Case}
In the lower bound case, we first substitute (\ref{e9}) into (\ref{P2e2}), which is in a quadratic form, and we then approximate the probability constraint by a deterministic form to deal with its non-convexity. It can be easily observed that (\ref{P2e2}) can be rewritten as
\begin{equation}\label{e15}
  \Pr \left\{ {{\bf{\tilde h}}_e^H{\bf{\Sigma }}_g^{\left( l \right)}{{{\bf{\tilde h}}}_e} > {z_g}} \right\} \le \varepsilon ,
\end{equation}
where ${{\bf{\tilde h}}_e} \sim CN\left( {0,{{\bf{I}}_N}} \right)$ and ${{\bf{\Sigma }}_g^{\left( l \right)}} = {{\bf{w}}_g}{\bf{w}}_g^H\rho {\alpha _{g,1}}$. To further transform the SOP constraint, we introduce the following lemma \cite{Reference16}:
\begin{lemma}\label{BS_Lemma}
  (Bernstein-type inequality)  \textnormal{Let ${\bf{G}} = {{\bf{h}}^H}{\bf{Qh}}$ where ${\bf{h}} \sim CN\left( {0,{{\bf{I}}_N}} \right)$ and ${\bf{Q}} \in {{\rm{C}}^{N \times N}}$ is a Hermite matrix. Then for any $\sigma  > 0$, we have
  \begin{equation}\label{e16}
    \Pr \!\!\left\{ {{\bf{G}} \!\ge\! Tr\left( {\bf{Q}} \right)\!\! +\!\! \sqrt {2\sigma } {{\left\| {\bf{Q}} \right\|}_F}\!\! +\!\! \sigma \!\! \cdot\!\! {{\left\{ {{\lambda _{\max }}\left( {\bf{Q}} \right)} \right\}}^ + }} \right\}\!\! \le\!\! \exp \left( { - \sigma } \right).
  \end{equation}
  }
\end{lemma}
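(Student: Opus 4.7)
The plan is to prove the Bernstein-type inequality via the standard Chernoff--Cram\'er recipe, after first reducing the problem to a scalar setting by diagonalization. Since ${\bf Q}$ is Hermitian, write ${\bf Q} = {\bf U}{\bf \Lambda}{\bf U}^H$ with ${\bf \Lambda} = \mathrm{diag}(\lambda_1, \ldots, \lambda_N)$. By unitary invariance of the standard complex Gaussian law, $\tilde{\bf h} := {\bf U}^H {\bf h}$ is again $CN(0, {\bf I}_N)$, so
\begin{equation*}
{\bf G} = \tilde{\bf h}^H {\bf \Lambda}\tilde{\bf h} = \sum_{i=1}^{N} \lambda_i |\tilde h_i|^2,
\end{equation*}
a weighted sum of i.i.d.\ unit-mean exponential random variables $|\tilde h_i|^2$. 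Observe that $Tr({\bf Q}) = \sum_i \lambda_i$ and $\|{\bf Q}\|_F^2 = \sum_i \lambda_i^2$, so the right-hand side of the target inequality is now expressed purely in terms of the eigenvalues $\{\lambda_i\}$.

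Next, I would apply the exponential Markov inequality. For any $s>0$ satisfying $s\cdot\{\lambda_{\max}({\bf Q})\}^+ < 1$ (a vacuous condition when all eigenvalues are non-positive), independence of the $|\tilde h_i|^2$ gives the moment-generating function
\begin{equation*}
E\{e^{s{\bf G}}\} = \prod_{i=1}^{N} \frac{1}{1 - s\lambda_i},
\end{equation*}
so that $\Pr\{{\bf G} \geq t\} \leq \exp(-st)\prod_i (1 - s\lambda_i)^{-1}$. Taking logarithms and applying the elementary bound $-\log(1-x) \leq x + \frac{x^2}{2(1-x)}$ (valid for $x<1$) termwise with $x = s\lambda_i$, then uniformly bounding each denominator $1-s\lambda_i$ from below by $1-s\{\lambda_{\max}({\bf Q})\}^+$, yields
\begin{equation*}
\log \Pr\{{\bf G} \geq t\} \leq -st + s\,Tr({\bf Q}) + \frac{s^2 \|{\bf Q}\|_F^2}{2\bigl(1 - s\{\lambda_{\max}({\bf Q})\}^+\bigr)}.
\end{equation*}

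The remaining step is to optimize $s$. I would substitute $t = Tr({\bf Q}) + \sqrt{2\sigma}\|{\bf Q}\|_F + \sigma\{\lambda_{\max}({\bf Q})\}^+$ and pick
\begin{equation*}
s = \frac{\sqrt{2\sigma}}{\|{\bf Q}\|_F + \sqrt{2\sigma}\,\{\lambda_{\max}({\bf Q})\}^+},
\end{equation*}
which automatically satisfies $s\{\lambda_{\max}\}^+ < 1$; a short rearrangement shows that the three terms on the right collapse to exactly $-\sigma$, delivering the claimed bound $\exp(-\sigma)$. The main technical nuance — the only non-mechanical part of the argument — is the one-sided use of the clipping $\{\lambda_{\max}\}^+$: when ${\bf Q}$ is negative semidefinite the clipped term vanishes and the same choice with $s = \sqrt{2\sigma}/\|{\bf Q}\|_F$ still yields a $-\sigma$ exponent, so no separate case analysis is needed. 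Everything else is routine algebra once the MGF identity and the optimal $s$ are in hand.
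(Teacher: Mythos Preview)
Your Chernoff--Cram\'er argument is essentially correct and self-contained, whereas the paper takes a different route: after the same diagonalization, it splits each $|\tilde h_i|^2$ into its real and imaginary parts to obtain $\mathbf{G}=\sum_{n=1}^{2N}\tfrac{1}{2}\hat q_n(\sqrt{2}r_n)^2$ with $2N$ i.i.d.\ real standard Gaussians, and then simply invokes Bechar's real-Gaussian Bernstein inequality (Lemma~0.1 of the cited preprint) together with $\sum q_n=Tr(\mathbf{Q})$ and $\sum q_n^2=\|\mathbf{Q}\|_F^2$. So the paper defers the analytic work to an external reference, while you reproduce that work from scratch in the complex/exponential setting; your route is more elementary and does not depend on locating Bechar's note, at the cost of carrying out the MGF bound and the optimization explicitly.

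One small technical slip to fix: the inequality $-\log(1-x)\le x+\tfrac{x^2}{2(1-x)}$ is \emph{not} valid for all $x<1$; it fails for $x<0$ (e.g.\ $x=-1$ gives $-\log 2\approx -0.693 > -0.75$). The clean repair is to use $-\log(1-x)\le x+\tfrac{x^2}{2(1-x^+)}$, which holds for every $x<1$ (for $x\ge 0$ it is your bound, for $x<0$ it reduces to $-\log(1-x)\le x+\tfrac{x^2}{2}$, easily checked by differentiation). Since $x^+=s\lambda_i^+\le s\{\lambda_{\max}(\mathbf{Q})\}^+$ in every term, the subsequent uniform bounding of denominators and the choice of $s$ go through unchanged, and the exponent $-\sigma$ follows exactly as you outlined.
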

\begin{proof}
   The detailed proof can be seen in \textbf{Appendix A}.
\end{proof}

The Bernstein-type inequality is known as a standard technique to deal with probabilistic constraints involving the quadratic form of a Gaussian random vector \cite{Reference6}. Applying \emph{Lemma \ref{BS_Lemma}} and defining ${{\bf{\Theta }}_g} \buildrel \Delta \over = {\bf{\tilde h}}_e^H{\bf{\Sigma }}_g^{\left( l \right)}{{{\bf{\tilde h}}}_e}$, it is straight-forward to obtain that
\begin{equation}\label{e17}
  \Pr \!\!\left\{\!\! {{{\bf{\Theta }}_g} \!\ge\! Tr\!\left(\! {{\bf{\Sigma }}_g^{\left( l \right)}} \!\right)\!\! + \!\!\sqrt {2\sigma } {{\left\| \! {{\bf{\Sigma }}_g^{\left( l \right)}} \!\right\|}_F} \!\!+\!\! \sigma \!\! \cdot\!\! {{\left\{\!\! {{\lambda _{\max }}\left( {{\bf{\Sigma }}_g^{\left( l \right)}} \right)} \!\!\right\}}^ + }} \!\!\right\}\!\! \le\!\! \exp \!\left(\! { - \sigma }\! \right)\!.
\end{equation}
Then, by letting $\sigma = \ln \left( {\varepsilon ^{ - 1}} \right)$, we can approximately transform the SOP constraint into \footnote{Some detailed discussions will be presented in Remark 6.}
\begin{equation}\label{e18}
  {z_g} \ge Tr\left( {{\bf{\Sigma }}_g^{\left( l \right)}} \right) + \sqrt {2\sigma } {\left\| {{\bf{\Sigma }}_g^{\left( l \right)}} \right\|_F} + \sigma  \cdot {\left\{ {{\lambda _{\max }}\left( {{\bf{\Sigma }}_g^{\left( l \right)}} \right)} \right\}^ + }.
\end{equation}
Since ${{\bf{\Sigma }}_g^{\left( l \right)}}$ is a linear function of ${\alpha _{g,1}}$, the transformed SOP in (\ref{e18}) is convex.

We now deal with the non-convexity of the objective function. By exploiting the monotonicity of logarithmic function, it is observed that the objective function can be transformed as
\begin{equation}\label{e19}
  \mathop {\max }\limits_{{\bf{a}},{\bf{z}}} \mathop {\min }\limits_{1 \le g \le G} \left\{ {\frac{{1 + {{\left| {{\bf{h}}_{g,1}^H{{\bf{w}}_g}} \right|}^2}\rho {\alpha _{g,1}}}}{{1 + {z_g}}}} \right\}.
\end{equation}
Since the numerator and the denominator of (\ref{e19}) both are linear functions of the optimization variables, the transformed problem is a standard fractional programming problem which can be solved by the Dinkelbach algorithm \cite{Reference14} for obtaining the globally optimal solution. Specifically, we introduce the initialized ${\lambda } = 0$ as the iteration parameter. In each iteration, by utilizing the introduced $\lambda$, the optimization problem is transformed into
\begin{subequations}\label{P3}
\begin{align}
&\mathop {\max }\limits_{{\bf{a}},{\bf{z}},\tau } \;\tau \tag{22}\\
&\;\;\textrm{s.t.}\;\;\;1\!\! +\!\! {\left| {{\bf{h}}_{g,1}^H{{\bf{w}}_g}} \right|^2}\rho {\alpha _{g,1}} \!\!-\!\! {\lambda}\left( {1 + {z_g}} \right) \!\ge\! \tau ,g = 1,2,...,G, \label{P3e2}\notag\\
&\;\;\;\;\;\;\;\;\;(\textrm{\ref{P2e3}})~\textrm{and}~(\textrm{\ref{e18}}),\notag
\end{align}
\end{subequations}
which is a convex problem that can be solved by some efficient solvers such as CVX \cite{Reference17}. After finite steps of iterations, the obtained result converges to the globally optimal solution within the predefined error tolerance. The algorithm is summarized in \emph{Algorithm \ref{A1}}.
\begin{algorithm}
\caption{Dinkelbach Algorithm for the MMSR Problem in the Lower Bound Case}\label{A1}
\begin{algorithmic}[1]
\STATE ${\lambda _1} = 0$;
\STATE Set the allowable tolerance $\delta  \ll 1$. Initialize the iteration index $n = 1$ and ${\rm{Judgemark}} = 0$.
\STATE \textbf{Repeat}:
\STATE Solve the problem (\ref{P3}) with $\lambda  = {\lambda _n}$. Obtain the optimal solution $\alpha _{g,k}^{ * \left( n \right)}$ and $z_g^{*\left( n \right)}$ in the $n$-th iteration
\STATE \textbf{(Compare)} If
\begin{align}
  &\mathop {\min \!\!}\limits_{1 \le g \le G} \;\left\{ {1\!\! +\!\! {{\left| {{\bf{h}}_{g,1}^H{{\bf{w}}_g}} \right|}^2}\rho \alpha _{g,1}^{ * \left( {n{\rm{ - }}1} \right)}\!\! - \!\!{\lambda _n}\left( {1 \!\!+\!\! z_g^{*\left( {n{\rm{ - }}1} \right)}} \right)} \right\}\!\! \le\!\! \delta, \nonumber
\end{align}
set ${\rm{Judgemark}} = 1$.
\STATE Update ${\lambda _{n + 1}} = \mathop {\min }\limits_{1 \le g \le G} \;\frac{{1 + {{\left| {{\bf{h}}_{g,1}^H{{\bf{w}}_g}} \right|}^2}\rho \alpha _{g,1}^{ * \left( n \right)}}}{{1 + z_g^{*\left( n \right)}}}$;
\STATE Update $n = n + 1$ ;
\STATE \textbf{Until}: ${\rm{Judgemark}} = 1$

\STATE The optimal value of the problem is ${\lambda _{n - 1}}$;
\end{algorithmic}
\end{algorithm}

%It should be noted that the tightness of the Bernstein-type inequality highly relies on the selection of the parameter $\sigma $. If the value of parameter $\sigma $ is not properly selected, the Bernstein-type inequality will act as a conservative approximation, which deteriorate the secrecy performance of the system. Consequently, the specific method to elaborately adjust the value of $\sigma $ to achieve high tightness of the approximation will be discussed in Section \uppercase\expandafter{\romannumeral3}-D.

\begin{remark}
\textnormal{(Tightness of Approximation) It should be noted that the tightness of the Bernstein-type inequality highly relies on the selection of parameter $\sigma $. If we choose $\sigma = \ln \left( {\varepsilon ^{ - 1}} \right)$ in (\ref{e18}), the Bernstein-type inequality will act as a very conservative approximation, which deteriorates the secrecy performance of the system. Thus the value of $\sigma $ should be elaborately adjusted to achieve a high tightness of the approximation. The conservative property of the Bernstein-type inequality and the specific adjustment method will be discussed in detail in Section \uppercase\expandafter{\romannumeral3}-D.}
\end{remark}

\begin{remark}
\textnormal{(Feasibility) It should be pointed out that to satisfy the QoS requirements of all the QUs, there exists a feasible region shaped by these constraints, and the optimal solution only exists when the feasible region is not empty. Therefore, the target QoS thresholds and the scheduled QU's channel quality have a significant impact on the achievable secrecy performance. The QU scheduling scheme can be carefully designed to enlarge the feasible region of the optimization problem (\ref{P1}). Specifically, since the intra-cluster interference has been partially removed by SIC and the ZFBF vectors are designed according to the instantaneous CSI of the SUs, the inter-cluster interference will play a significant role on the SINRs of the QUs. Inspired by the above observations, an effective QU scheduling scheme is that the QUs in the same direction are scheduled as a cluster pairing with each SU, which aims to make the channel fading vectors of the SU and QUs in each cluster have higher correlation. Specifically, we schedule the QUs into the $g$-th cluster with the following condition:
\begin{equation}\label{en10}
\frac{{{\bf{h}}_{g,k}^H{{\bf{h}}_{g,1}}}}{{{\bf{h}}_{g,1}^H{{\bf{h}}_{g,1}}}} > {\phi _{g,k}},\;g = 1,2,...,G,\;k = 2,3,...,K,
\end{equation}
where ${{\phi _{g,k}}}$ represents the required threshold for the channel correlation coefficient between the instantaneous CSI of the $k$-th user and the SU in the $g$-th cluster (The subscript of ${{\phi _{g,k}}}$ is omitted in the following sections for notional simplicity).}

%Accordingly, the specific instantaneous CSI of $k$-th user in the $g$-th cluster can be expressed as
%\begin{equation}\label{en8}
%  {{\bf{h}}_{g,k}} = \sqrt {{\varphi _{g,k}}} {{\bf{h}}_{g,1}} + \sqrt {1 - {\varphi _{g,k}}} {{{\bf{\hat e}}}_{g,k}},\;k = 2,...,K,
%\end{equation}
%where ${{\varphi _{g,k}}}$ represents the channel correlation coefficient between the instantaneous CSI of $k$-th user (RU) and SU in the $g$-th cluster (The subscript $g,k$ are omitted in the following sections for notional simplicity) and ${{{\bf{\hat e}}}_{g,k}}$ reflects the difference of instantaneous CSI and is independent of ${{\bf{g}}_{g,1}}$ with independently and identical distribution (i.i.d) zero mean and unit variance complex Gaussian distributed entries. In addition, based on the user scheduling method, it can be observed that according to the decoding order depending on the effective channel gains of LUs, if the constraints (\ref{P1e3}) is satisfied, the perfect SIC process is ensured to be achieved.
\end{remark}

\subsection{Proposed Method for the Upper Bound Case}
In the upper bound case, we substitute (\ref{en9}) into (\ref{P2e2}) to obtain the specific form of the SOP and then exploit the similar method to handle the SOP constraint. The transformed SOP constraint has the same form as (\ref{e15}) while
\begin{equation}\label{en16}
  {\bf{\Sigma }}_g^{\left( u \right)} = {{\bf{W}}_{g,1}} - {{\bf{W}}_{g,2}} - {{\bf{W}}_{g,3}},
\end{equation}
where ${{{\bf{W}}_{g,1}}} \!= \!{{\bf{w}}_g}{\bf{w}}_g^H\rho {\alpha _{g,1}}$, ${{{\bf{W}}_{g,2}}} \!=\! {z_g}{{\bf{w}}_g}{\bf{w}}_g^H\sum\nolimits_{k \ne 1}^K {\rho {\alpha _{g,k}}} $, and ${{{\bf{W}}_{g,3}}} \!=\! {z_g}\sum\nolimits_{i = 1,i \ne g}^G {{{\bf{w}}_i}{\bf{w}}_i^H\sum\nolimits_{k = 1}^K {\rho {\alpha _{i,k}}} } $. It can be easily recognized that ${\bf{\Sigma }}_g^{\left( u \right)}$ is a Hermite matrix. Then, we exploit the Bernstein-type inequality and obtain the same form as (\ref{e17}), where $\sigma $ should also be adjusted to reduce the conservatism as in Section \uppercase\expandafter{\romannumeral3}-D. Then, we can approximately transform the SOP constraint as the same form as (\ref{e18}). The max-min  problem can be formulated as
\begin{subequations}\label{Pnn1}
\begin{align}
&\mathop {\max }\limits_{{\bf{z}}} \mathop {\min }\limits_{1 \le g \le G} \;{\log _2}\left( {\frac{{1 + {{\left| {{\bf{h}}_{g,1}^H{{\bf{w}}_g}} \right|}^2}\rho {\alpha _{g,1}}}}{{1 + {z_g}}}} \right)\tag{25}\\
&\;\;\textrm{s.t.}\;\;(\textrm{\ref{P2e3}})~\textrm{and}~(\textrm{\ref{e18}}).\notag
\end{align}
\end{subequations}

However, since ${\bf{\Sigma }}_g^{\left( u \right)}$ consists of the product of variables ${\bf{z}}$ and ${\bf{a}}$ in ${{\bf{W}}_{g,2}}$ and ${{\bf{W}}_{g,3}}$, the right-hand side of (\ref{e18}) is not a linear function of optimization variables and consequently the transformed constraint is still non-convex. To deal with the non-convexity of the objective function and the transformed SOP constraint, we propose an AO-based algorithm and decouple the problem (\ref{Pnn1}) into two sub-problems.

\subsubsection{Sub-problem 1: Fix {\bf{a}} and Optimize {\bf{z}} }
In the $\left( {m + 1} \right)$-th iteration, when the optimal ${{\bf{a}}^{\left( m \right)}} = \left\{ {\alpha _{g,k}^{\left( m \right)},g = 1,2,...,G,k = 1,2,...,K} \right\}$ in the $m$-th iteration is obtained, we fix ${{\bf{a}}^{\left( m \right)}}$ and optimize ${\bf{z}} = \left\{ {{z_g},g = 1,2,...,G} \right\}$ to solve the following problem:
\begin{subequations}\label{Pn5}
\begin{align}
&\mathop {\max }\limits_{\bf{z}} \mathop {\min }\limits_{1 \le g \le G} \;{\log _2}\left( {\frac{{1 + {{\left| {{\bf{h}}_{g,1}^H{{\bf{w}}_g}} \right|}^2}\rho \alpha _{g,1}^{\left( m \right)}}}{{1 + {z_g}}}} \right), \ \
\textrm{s.t.}\;\;(\textrm{\ref{e18}})\tag{26}
\end{align}
\end{subequations}
Since problem (\ref{Pn5}) only has variable ${\bf{z}}$, the original product terms of ${\bf{z}}$ and ${\bf{a}}$ can be seen as the linear terms of variable ${\bf{z}}$ in this sub-problem and the constraints of problem (\ref{Pn5}) is thereby convex. By exploiting the monotonicity of logarithm, the objective function is transformed as
\begin{equation}\label{en20}
  \mathop {\min }\limits_{\bf{z}} \mathop {\max }\limits_{1 \le g \le G} \;\frac{{1 + {z_g}}}{{1 + {{\left| {{\bf{h}}_{g,1}^H{{\bf{w}}_g}} \right|}^2}\rho \alpha _{g,1}^{\left( m \right)}}}.
\end{equation}
After introducing the slack variable $\tau $, problem  (\ref{Pn5}) is transformed as follows:
\begin{subequations}\label{Pn6}
\begin{align}
&\mathop {\min }\limits_{{\bf{z}},\tau } \;\tau \label{Pn6e1} \\
&\;\;\textrm{s.t.}\;\;1 + {z_g} \le \tau \left( {1 + {{\left| {{\bf{h}}_{g,1}^H{{\bf{w}}_g}} \right|}^2}\rho \alpha _{g,1}^{\left( m \right)}} \right),\;\;(\textrm{\ref{e18}}), \label{Pn6e2}
\end{align}
\end{subequations}
which can be easily observed as a convex problem.

\subsubsection{Sub-problem 2: Fix {\bf{z}} and Optimize {\bf{a}} }
In the $\left( {m + 1} \right)$-th iteration, we fix the obtained ${{\bf{z}}^{\left( {m + 1} \right)}} = \left\{ {z_g^{\left( {m + 1} \right)},g = 1,2,...,G} \right\}$ and optimize ${\bf{a}} = \left\{ {{\alpha _{g,k}},g = 1,2,...,G,k = 1,2,...,K} \right\}$ by solving the following problem:
%based on the obtained ${{\bf{z}}^{\left( {m + 1} \right)}} = \left\{ {z_g^{\left( {m + 1} \right)},g = 1,2,...,G} \right\}$, we fix ${{\bf{z}}^{\left( {m + 1} \right)}}$ and optimize ${\bf{a}} = \left\{ {{\alpha _{g,k}},g = 1,2,...,G,k = 1,2,...,K} \right\}$ by solving the following problem:
\begin{subequations}\label{Pn4}
\begin{align}
&\mathop {\max }\limits_{\bf{a}} \mathop {\min }\limits_{1 \le g \le G} \;{\log _2}\left( {\frac{{1 + {{\left| {{\bf{h}}_{g,1}^H{{\bf{w}}_g}} \right|}^2}\rho {\alpha _{g,1}}}}{{1 + z_g^{\left( {m + 1} \right)}}}} \right)\tag{29}\\
&\;\;\textrm{s.t.}\;\;(\textrm{\ref{P2e3}})~\textrm{and}~(\textrm{\ref{e18}}).\notag
\end{align}
\end{subequations}
Similarly, after introducing the slack variable $\tau $, we can transform the above problem to deal with the non-convexity of the objective function as follows:
\begin{subequations}\label{Pn3}
\begin{align}
&\mathop {\max }\limits_{\bf{a},\tau } \;\tau \label{Pn3e1}\\
&\;\;\textrm{s.t.}\;\;\;1 + {\left| {{\bf{h}}_{g,1}^H{{\bf{w}}_g}} \right|^2}\rho {\alpha _{g,1}} \ge {2^\tau }\left( {1 + z_g^{\left( {m + 1} \right)}} \right)\label{Pn3e2}\\
&\;\;\;\;\;\;\;\;\;(\textrm{\ref{P2e3}})~\textrm{and}~(\textrm{\ref{e18}}),\nonumber
\end{align}
\end{subequations}
which is a convex problem.

Since the objective value with the solutions obtained by solving the two optimization sub-problems is non-decreasing over iterations, and the optimal value of (\ref{Pn5}) is finite due to the power-limited NOMA system, the iterative solution is guaranteed to converge to a stationary solution. The proposed AO-based algorithm is summarized as \emph{Algorithm \ref{A3}}.

\begin{algorithm}[t]
\caption{AO-based Algorithm for the MMSR Problem in the Upper Bound Case}\label{A3}
\begin{algorithmic}[1]
\STATE \textbf{Initialize}: ${{\bf{a}}^{\left( 0 \right)}} = \left\{ {\alpha _{g,k}^{\left( 0 \right)},g = 1,2,...,G,k = 1,2,...,K} \right\}$, $m = 0$;
\STATE \textbf{Repeat}:
\STATE $m = m + 1$;
\STATE Solve problem (\ref{Pn6}) and obtain the optimal solution ${{\bf{z}}^{\left( m \right)}} = \left\{ {z_g^{\left( m \right)},g = 1,2,...,G} \right\}$ with ${{\bf{a}}^{\left( {m - 1} \right)}}$;
\STATE Solve problem (\ref{Pn3}) and obtain the optimal solution ${{\bf{a}}^{\left( m \right)}}$ with ${{\bf{z}}^{\left( m \right)}}$;
\STATE \textbf{Until}: Convergence
\end{algorithmic}
\end{algorithm}

\begin{remark}
\textnormal{(Initialization of ${{\bf{a}}^{\left( 0 \right)}}$) We herein propose an efficient approach to find an initial feasible solution for the
algorithm by reducing the feasible region and replacing the non-convex SOP constraint with the convex one. Specifically, the ${\bf{\Sigma }}_g^{\left( u
\right)}$ in the upper bound case can be replaced by the ${{\bf{\Sigma }}_g^{\left( l \right)}}$ in the lower bound case and we solve the similar problem like (\ref{P3}) by removing the denominator of the objective function. This is because in the practical scenario, if intra-cluster and inter-cluster interferences are removed at Eve (i.e., lower bound case), the receive SINR of the targeted signal at Eve is enhanced, which will lead to a worse secrecy performance. In other words, if the variables can satisfy the constraints in the lower bound case, these variables can also satisfy those in the upper bound case and we can obtain an initial ${{\bf{a}}^{\left( 0 \right)}}$ in the feasible region.}
\end{remark}
%Since the SOP constraints in the lower bound case and the upper bound case have the same form but the expressions of ${{\bf{\Sigma}}_g}$ are different, in order to obtain the initial ${{\bf{a}}^{\left( 0 \right)}}$ in the feasible region, we can substitute ${{\bf{\Sigma}}_g}$ in the lower bound case for that in the upper bound case and the transformed SOP constraint satisfies the demand of convex optimization as illustrated in Section \uppercase\expandafter{\romannumeral3}-B. Then we can replace the objective function in (\ref{P3}) with the linear function by removing the denominator to obtain the feasible solution, and the transformed optimization problem can be efficiently solved by CVX.

\subsection{Reducing the Conservatism of Approximation}
It is pointed out in Section \uppercase\expandafter{\romannumeral3}-B that the tightness of the Bernstein-type inequality highly relies on the selection of $\sigma $ and thus has a significant impact on the secrecy performance. Therefore, it is necessary to elaborately select the value of $\sigma $ to achieve high tightness of the approximation. Similarly to the setting of $\sigma$ in Section \uppercase\expandafter{\romannumeral3}-B, we select $\sigma = \ln \left( {\varepsilon _0^{ - 1}} \right)$ by introducing ${\varepsilon _0}$ to represent the maximum allowable SOP in the process of optimization. Then, after solving the approximated problems, the following lemma is introduced to calculate the actual SOP:
%in Section \uppercase\expandafter{\romannumeral3}-B and Section \uppercase\expandafter{\romannumeral3}-C
\begin{lemma}\label{SOP_Lemma}
  \textnormal{Let ${\lambda _i},i = 1,2,...,N$ denote the eigenvalues of ${\bf{Q}}$ in the descending order. If ${N_1}$ eigenvalues among $\left\{ {{\lambda _i}} \right\}$ are positive and distinct, then we have
  \begin{equation}\label{en21}
    \Pr \left\{ {{\bf{\tilde h}}_E^H{\bf{Q}}{{{\bf{\tilde h}}}_E} > z} \right\} = \sum\limits_{i = 1}^{{N_1}} {{{\prod\limits_{l \ne i}^N {\left( {1 - \frac{{{\lambda _l}}}{{{\lambda _i}}}} \right)} }^{ - 1}}} \exp \left( { - \frac{z}{{{\lambda _i}}}} \right).
  \end{equation}
  }
\end{lemma}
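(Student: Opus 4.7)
The plan is to diagonalize $\mathbf{Q}$ and reduce the quadratic form to a weighted sum of independent unit-mean exponential random variables, then obtain the tail distribution by inverting its Laplace transform via partial fractions. First, since $\mathbf{Q}$ is Hermitian, write $\mathbf{Q} = \mathbf{U}\boldsymbol{\Lambda}\mathbf{U}^H$ with $\mathbf{U}$ unitary and $\boldsymbol{\Lambda} = \mathrm{diag}(\lambda_1,\ldots,\lambda_N)$. By unitary invariance of $CN(0,\mathbf{I}_N)$, the rotated vector $\mathbf{g} := \mathbf{U}^H\tilde{\mathbf{h}}_E$ has the same distribution as $\tilde{\mathbf{h}}_E$, so
\begin{equation*}
\tilde{\mathbf{h}}_E^H \mathbf{Q}\, \tilde{\mathbf{h}}_E \;=\; \sum_{i=1}^N \lambda_i\,|g_i|^2,
\end{equation*}
where the $|g_i|^2$ are i.i.d.\ unit-mean exponential random variables. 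Terms with $\lambda_i = 0$ vanish, so under the hypothesis that only $N_1$ eigenvalues are positive (and the others are zero, i.e.\ $\mathbf{Q}$ is rank-$N_1$ positive semidefinite), it suffices to study the CCDF of $Y := \sum_{i=1}^{N_1} \lambda_i |g_i|^2$ with $\lambda_1 > \cdots > \lambda_{N_1} > 0$.

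Next, compute the Laplace transform of $Y$ using independence:
\begin{equation*}
\mathcal{L}_Y(s) \;=\; \prod_{i=1}^{N_1} \frac{1}{1 + \lambda_i s}.
\end{equation*}
Because the positive eigenvalues are distinct, a standard partial-fraction decomposition gives
\begin{equation*}
\mathcal{L}_Y(s) \;=\; \sum_{i=1}^{N_1} \frac{A_i}{1 + \lambda_i s}, \qquad A_i \;=\; \prod_{\substack{l=1 \\ l\neq i}}^{N_1}\Bigl(1 - \tfrac{\lambda_l}{\lambda_i}\Bigr)^{-1},
\end{equation*}
where $A_i$ is obtained by the usual residue evaluation at $s = -1/\lambda_i$. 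Inverting term-by-term yields the density $f_Y(y) = \sum_{i=1}^{N_1} (A_i/\lambda_i)\exp(-y/\lambda_i)$ for $y>0$, and integrating from $z$ to $\infty$ gives $\Pr\{Y > z\} = \sum_{i=1}^{N_1} A_i\exp(-z/\lambda_i)$.

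Finally, I would reconcile this expression with the stated form, where the product over $l \neq i$ runs from $1$ to $N$ rather than only up to $N_1$. This is immediate: for each $l > N_1$ with $\lambda_l = 0$, the factor $(1-\lambda_l/\lambda_i) = 1$, so extending the product to $l=1,\ldots,N$ (with $l\neq i$) leaves $A_i$ unchanged. This delivers the claimed identity.

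The main obstacle I anticipate is simply bookkeeping around the hypothesis on eigenvalues. The distinctness assumption is essential: repeated positive eigenvalues would create higher-order poles in $\mathcal{L}_Y(s)$, forcing polynomial-in-$z$ prefactors in the inversion. Similarly, if $\mathbf{Q}$ were allowed to have strictly negative eigenvalues, the Laplace transform would fail to converge on a common half-plane and one would have to switch to a characteristic-function / contour-integration argument and handle the tail on both sides of zero; I would therefore make explicit at the start that the proof is carried out under the positive-semidefinite setting that is the one actually needed to evaluate the SOP with $\mathbf{Q}=\boldsymbol{\Sigma}_g^{(l)}$ (and its natural extension for $\boldsymbol{\Sigma}_g^{(u)}$ when the relevant eigenstructure is p.s.d.).
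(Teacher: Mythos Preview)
The paper does not supply its own proof of this lemma; it simply defers to the cited reference of Al-Naffouri and Hassibi on indefinite quadratic forms of Gaussian vectors. Your argument is correct and fully self-contained for the positive-semidefinite case (the $N-N_1$ remaining eigenvalues equal to zero), and that case indeed covers the lower-bound matrix $\boldsymbol{\Sigma}_g^{(l)}=\mathbf{w}_g\mathbf{w}_g^H\rho\alpha_{g,1}$. The cited reference, by contrast, treats general Hermitian $\mathbf{Q}$ with both positive and negative eigenvalues via the characteristic function and a contour/residue computation rather than the one-sided Laplace transform; that generality is what the paper actually needs when it later applies the lemma to $\boldsymbol{\Sigma}_g^{(u)}=\mathbf{W}_{g,1}-\mathbf{W}_{g,2}-\mathbf{W}_{g,3}$, which is genuinely indefinite, and it is also what makes the product $\prod_{l\ne i}^{N}(1-\lambda_l/\lambda_i)^{-1}$ meaningful with negative $\lambda_l$. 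You already diagnose this limitation accurately in your final paragraph. In summary, your route is more elementary and suffices for the rank-$N_1$ p.s.d.\ situation, while the paper's cited proof buys the full indefinite statement at the price of a less elementary inversion; if you want your write-up to match the lemma as stated and as used for $\boldsymbol{\Sigma}_g^{(u)}$, you would need to carry out the characteristic-function argument you allude to rather than the Laplace-transform one.
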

\begin{proof}
The detailed proof can be seen in \cite{ReferenceGaussian}.
\end{proof}

\begin{figure}[t]
\centering
\includegraphics[width=9cm,height=6cm]{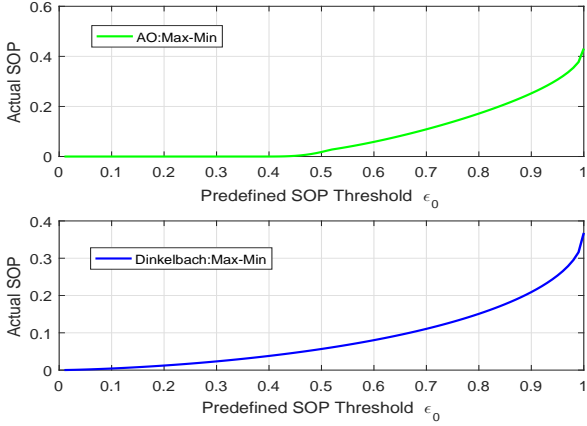}
\caption{Actual SOP versus predefined SOP parameter ${\varepsilon _0}$. Parameter settings: $G = N = 6$, $K = 3$, $a = 4$, ${\sigma _N^2} =  - 90dBm$, $\phi = 0.9$, $\varepsilon  = {10^{ - 2}}$, $r = 2$ for all QUs and $P = 40dBm$}\label{fbisection}
\end{figure}

%which is equivalent to adjust the value of ${\varepsilon _0}$ by letting $\sigma = \ln \left( {\varepsilon _0^{ - 1}} \right)$. To obtain the relationship between the actual SOP and the parameter ${\varepsilon _0}$ (the introduced parameter representing the maximum allowable SOP in the optimization problem), we solve the optimization problem with ${\varepsilon _0}$ varying from 0.01 to 1 and exploit the obtained solutions corresponding to each value of ${\varepsilon _0}$ to calculate the actual SOP using the following lemma:

%and thus the ${{\bf{\Sigma }}_g^{\left( l \right)}}$ and ${{\bf{\Sigma }}_g^{\left( u \right)}}$ are calculated
%of the corresponding matrices
According to (\ref{en21}), if the optimal values of the optimization variables have been obtained in the above subsections, we can exploit eigenvalues to calculate the actual SOP in the lower and upper bound cases by \emph{Lemma \ref{SOP_Lemma}}. To illustrate the conservatism of the approximation and obtain the relationship between the actual SOP and the parameter ${\varepsilon _0}$, we solve the optimization problems with ${\varepsilon _0}$ varying from 0.01 to 1 and depict the actually achieved SOP as the function of ${\varepsilon _0}$ in Fig. \ref{fbisection}. In the process of simulation to obtain Fig. \ref{fbisection}, according to the user scheduling requirement in (\ref{en10}), the specific instantaneous CSI of the $k$-th user in the $g$-th cluster can be modeled as
\begin{equation}\label{en8}
  {{\bf{h}}_{g,k}} = \sqrt {{\phi}} {{\bf{h}}_{g,1}} + \sqrt {1 - {\phi}} {{{\bf{\hat e}}}_{g,k}},\;k = 2,...,K,
\end{equation}
where ${\phi}$ represents the channel correlation coefficient as defined in Section \uppercase\expandafter{\romannumeral3}-C and ${{{\bf{\hat e}}}_{g,k}}$ reflects the difference of instantaneous CSI and is independent of ${{\bf{h}}_{g,1}}$ with independently and identical distribution (i.i.d) zero mean and unit variance complex Gaussian distributed entries. As can be seen in Fig. \ref{fbisection}, the actual value of SOP is far less than ${\varepsilon _0}$ in each optimization problem. Therefore, to achieve high tightness of the Berstein-type inequality, ${\varepsilon _0}$ is set to ensure the calculated actual value of SOP to be equal to $\varepsilon $. Since the actual value of SOP is a nondecreasing function of ${\varepsilon _0}$, we can use the bisection method to find out the specific value of ${\varepsilon _0}$.

%According to (\ref{en21}), the accurate value of SOP can be calculated by exploiting eigenvalues of the corresponding matrix. Since ${{\bf{\Sigma}}_g}$ may be indefinite in this paper, it is difficult to exploit the accurate calculated value to transform the SOP constraint in the optimization problem. However, if the optimal values of optimization variables have been obtained in the above sections and thus the corresponding ${{\bf{\Sigma}}_g}$ is calculated, the actual SOP can be calculated directly using \emph{Lemma \ref{SOP_Lemma}}.

\subsection{Computational Complexity Analysis}
\subsubsection{Lower Bound Case}In each iteration of \emph{Algorithm \ref{A1}}, problem (\ref{P3}) is a linear programming (LP) and the complexity of solving an LP is $O\left( {n_{LP}^2{m_{LP}}} \right)$, where ${{m_{LP}}}$ is the number of linear constraints and ${{n_{LP}}}$ is the dimension of optimization variables \cite{ReferenceMIMO1}. Specifically, we have ${m_{LP}} = G\left( {K + 1} \right) + 1$ and ${n_{LP}} = G\left( {K + 1} \right) + 1$. Therefore, the computational complexity of \emph{Algorithm \ref{A1}} is $O\left( {{\eta }{l_1}{{\left( {G\left( {K + 1} \right) + 1} \right)}^3}} \right)$, where ${{l_1}}$ denotes the number of iterations of the Dinkelbach algorithm and ${\eta}$ denotes the number of bisection iterations to achieve high tightness of the approximation as described in Section \uppercase\expandafter{\romannumeral3}-D.

\subsubsection{Upper Bound Case}The computational complexity of AO-based \emph{Algorithm \ref{A3}} includes the complexities of solving the subproblems (\ref{Pn6}) and (\ref{Pn3}). The above two subproblems both are LPs and the complexities are ${\varpi _1} = O\left( {2G{{\left( {G + 1} \right)}^2}} \right)$ and ${\varpi _2} = O\left( {{{\left( {GK + 1} \right)}^2}\left( {G\left( {K + 1} \right) + 1} \right)} \right)$ in each iteration, respectively. Then we can obtain that the computational complexity of \emph{Algorithm \ref{A3}} is $O\left( {{\eta}{l_2}\left( {{\varpi _1} + {\varpi _2}} \right)} \right)$, where ${{l_2}}$ denotes the number of alternating iterations.

\section{Max-Sum Achievable Secrecy Rate}
In this section, we design the optimal power allocation strategy in both lower and upper bound cases that maximizes the SSR of the SUs subject to the same constraints as in Section \uppercase\expandafter{\romannumeral3}. The important MSSR metric reflects the maximal ability to provide confidential transmissions.
%The objective of MSSR is an important metric that reflects the maximal ability to provide confidential transmissions.

\subsection{Problem Formulation}
Similar to the process in Section \uppercase\expandafter{\romannumeral3}-A, the MSSR problem is formulated as follows
\begin{subequations}\label{P4}
\begin{align}
&\mathop {\max }\limits_{{\bf{a}},{\bf{z}}} \sum\limits_{g = 1}^G {\left[ {{{\log }_2}\left( {\frac{{1 + {{\left| {{\bf{h}}_{g,1}^H{{\bf{w}}_g}} \right|}^2}\rho {\alpha _{g,1}}}}{{1 + {z_g}}}} \right)} \right]} \label{P4e1}\\
&\;\;\textrm{s.t.}\;\;{z_g}\! \ge\! Tr\!\left(\! {{\bf{\Sigma }}_g^{\left( i \right)}} \!\right)\!\! +\!\! \sqrt {2\sigma } {\left\|\! {{\bf{\Sigma }}_g^{\left( i \right)}} \!\right\|_F} \!\!+\!\! \sigma \!\! \cdot\!\! {\left\{\! {{\lambda _{\max }}\!\left(\! {{\bf{\Sigma }}_g^{\left( i \right)}} \!\right)} \!\right\}^ + },\label{P4e2}\\
&\;\;\;\;\;\;\;\;\;\frac{{{S_{g,k}}}}{{I_1^{g,k}\!\! +\!\! I_2^{g,k}\!\! +\!\! 1}} \!\ge \!{r_{g,k}},g = 1,...,G;\;k = 2,...,K\label{P4e3}\\
&\;\;\;\;\;\;\;\;\;\sum\limits_{g = 1}^G {\sum\limits_{k = 1}^K {{\alpha _{g,k}}} }  \le 1.\label{P4e4}
\end{align}
\end{subequations}
where $i \in \left\{ {l,u} \right\}$ in constraint (\ref{P4e2}) and the index $l,u$ correspond to the approximated SOP constraint in the lower and upper bound cases, respectively.

\subsection{MO-based Method for the Lower Bound Case}
Due to the non-convexity of the objective function, problem (\ref{P4}) is also challenging to be solved directly. To further transform the optimization problem, we first introduce the following definition of the general linear fractional programming (GLFP) \cite{ReferenceMonotonic}.
\begin{myDef}\label{GLFP_Definition}
  \textnormal{(\textbf{GLFP})  An optimization problem belongs to the class of GLFP if it can be represented by the following formulation:
  \begin{subequations}\label{GLFP_Problem}
  \begin{align}
  &{\rm{maximize}}\;\;\;\varphi \left( {\frac{{{f_1}\left( {\bf{x}} \right)}}{{{g_1}\left( {\bf{x}} \right)}},\frac{{{f_2}\left( {\bf{x}} \right)}}{{{g_2}\left( {\bf{x}} \right)}},...,\frac{{{f_m}\left( {\bf{x}} \right)}}{{{g_m}\left( {\bf{x}} \right)}}} \right)\\
  &\;\;\;\;\;\;\textrm{s.t.}\;\;\;\;\;\;\;\;{\bf{x}} \in {\bf{\Delta }},
  \end{align}
  \end{subequations}
  where the domain ${\bf{\Delta }}$ is a non-empty polytope\footnote{Polytope means the generalization to any dimension of polygon in two dimensions, polyhedron in three dimensions, and polychoron in four dimensions. The detailed explanation can be seen in \cite{ReferenceMonotonic}.} in ${{\bf{R}}^n}$ (the $n$-dim real domain), function $\varphi :{\bf{R}}_ + ^m \to {\bf{R}}$ is increasing on ${\bf{R}}_ + ^m$, and functions ${f_1},...,{f_m},{g_1},...,{g_m}:{\bf{\Delta }} \to {{\bf{R}}_{ +  + }}$ are positive-valued linear affine functions on ${\bf{\Delta }}$.}
\end{myDef}

It can be observed that ${1 + {{\left| {{\bf{h}}_{g,1}^H{{\bf{w}}_g}} \right|}^2}\rho {\alpha _{g,1}}}$ and ${1 + {z_g}}$ both are positive-valued linear affine functions and $\sum\nolimits_{g = 1}^G {{{\log }_2}\left(  \cdot  \right)} $ is increasing on ${\bf{R}}_ + ^m$. In addition, ${{\bf{\Sigma }}_g^{\left( l \right)}} = {{\bf{w}}_g}{\bf{w}}_g^H\rho {\alpha _{g,1}}$ in the lower bound case and thus all the constraints in (\ref{P4}) are linear functions of optimization variables, which means the feasible domain of (\ref{P4}) is a nonempty polytope if it exists. Thus, problem (\ref{P4}) in the lower bound case fits the characteristics of the GLFP form by \emph{Definition \ref{GLFP_Definition}} and the hidden monotonicity in the objective function can be exploited to recognize the specific GLFP as a monotonic optimization problem that can be solved by an efficient outer polyblock approximation algorithm \cite{Reference15} for the globally optimal solution (The basic mathematical preliminaries of monotonic optimization and the hidden monotonicity of GLFP will be shown in \textbf{Appendix B}). Specifically, to solve the problem (\ref{P4}) we first introduce the vector ${\bf{u}} = {[{u_1},{u_2},...,{u_G}]^T}$ with each element defined as
\begin{equation}\label{e20}
  {u_g} = \frac{{1 + {{\left| {{\bf{h}}_{g,1}^H{{\bf{w}}_g}} \right|}^2}\rho {\alpha _{g,1}}}}{{1 + {z_g}}}, g = 1,2,...,G.
\end{equation}
By exploiting the monotonic characteristic of the sum-log function, problem (\ref{P4}) can be transformed as the following standard monotonic optimization problem:
\begin{equation}\label{P6}
\mathop {\max }\limits_{\bf{u}} \;\sum\limits_{g = 1}^G {{{\log }_2}\left( {{u_g}} \right)},\ \ \textrm{s.t.}\;\;\;{\bf{u}} \in \Xi
\end{equation}
where $\Xi $ is defined as
\begin{equation}\label{e21}
  \Xi \! =\! \left\{\! {\left. {\bf{u}} \right|1\! \le\! {u_g}\! \le\! \frac{{1\! + \!{{\left| {{\bf{h}}_{g,1}^H{{\bf{w}}_g}} \right|}^2}\rho {\alpha _{g,1}}}}{{1 + {z_g}}}, 1 \!\le \!g \!\le\! G, \left\{ {{\bf{a}},{\bf{z}}} \right\}\! \in \!\Lambda } \!\right\},
\end{equation}
with $\Lambda $  being the feasible region determined by the constraints in problem (\ref{P4}). Since a monotonically increasing function always achieves its maximum over a polyblock at one of its proper vertices, the outer polyblock approximation algorithm successively maximizes the increasing objective function on a sequence of polyblocks that enclose the feasible set $\Xi$(The definition of the polyblock and the proper vertices of a polyblock will be shown in \textbf{Appendix B}). Based on this, we firstly construct an initialized polyblock ${B^{\left( 1 \right)}}$ with the vertex set ${V^{\left( 1 \right)}}$ that contains only one proper vertex ${{\bf{u}}^{\left( 1 \right)}}$. The constructed polyblock is supposed to cover the whole feasible set $\Xi$. Then we construct the new smaller polyblock ${B^{\left( 2 \right)}}$ with a new vertex set ${V^{\left( 2 \right)}}$ by replacing ${{\bf{u}}^{\left( 1 \right)}}$ with the new vertices $\left\{ {{\bf{\tilde u}}_1^{\left( 1 \right)},...,{\bf{\tilde u}}_G^{\left( 1 \right)}} \right\}$. Each new vertex is generated by
\begin{equation}
  {\bf{\tilde u}}_g^{\left( l \right)} = {{\bf{u}}^{\left( l \right)}} - \left( {u_g^{\left( l \right)} - {\varphi _g}\left( {{{\bf{u}}^{\left( l \right)}}} \right)} \right){{\bf{e}}_g},\label{e27}
\end{equation}
where ${\varphi _g}\left( {{{\bf{u}}^{\left( l \right)}}} \right)$ is the $g$-th element of $\Phi \left( {{{\bf{u}}^{\left( l \right)}}} \right)$, which is the projection of ${{\bf{u}}^{\left( l \right)}}$ on the feasible set $\Xi$, and ${{\bf{e}}_g}$ is the unit vector with the non-zero element only at index $g$. The projection of ${{\bf{u}}^{\left( l \right)}}$ on the feasible region is $\Phi \left( {{{\bf{u}}^{\left( l \right)}}} \right) = \lambda {{\bf{u}}^{\left( l \right)}}$ where
\begin{subequations}\label{e26}
\begin{align}
&\lambda  = \max \left\{ {\left. \beta  \right|\beta {\bf{u}} \in \Xi } \right\}\notag\\
&\;\; = \max \left\{\! {\left. \beta  \right|\beta \! \le \!\mathop {\min }\limits_{1 \le g \le G} \!\!\frac{{1 \!+\! {{\left| {{\bf{h}}_{g,1}^H{{\bf{w}}_g}} \right|}^2}\rho {\alpha _{g,1}}}}{{u_g^{\left( l \right)}\left( {1\! +\! {z_g}} \right)}},\left\{ {{\bf{a}},{\bf{z}}} \right\} \!\in\! \Lambda } \right\}\notag\\
&\;\;  = \mathop {\max }\limits_{\left\{ {{\bf{a}},{\bf{z}}} \right\} \in \Lambda } \mathop {\min }\limits_{1 \le g \le G} \;\frac{{1 + {{\left| {{\bf{h}}_{g,1}^H{{\bf{w}}_g}} \right|}^2}\rho {\alpha _{g,1}}}}{{u_g^{\left( l \right)}\left( {1 + {z_g}} \right)}}.\tag{39}
\end{align}
\end{subequations}
It is observed that problem (\ref{e26}) can also be solved by the Dinkelbach algorithm for the globally optimal solution. After calculating the projection and replacing the vertices, the feasible set $\Xi$ is still contained in the newly constructed smaller polyblock. Then we choose the optimal vertex whose projection maximizes the objective function of problem (\ref{P6}) by
\begin{equation}
  {{\bf{u}}^{\left( {l+1} \right)}} = \mathop {\arg \max }\limits_{{\bf{u}} \in {V^{\left( {l+1} \right)}}} \left\{ {\sum\limits_{g = 1}^G {{{\log }_2}\left( {{\varphi _g}\left( {\bf{u}} \right)} \right)} } \right\}.\label{e28}
\end{equation}
Then we repeat the above procedure to construct smaller polyblocks iteratively, and the algorithm terminates when a predefined tolerance is satisfied. The algorithm is summarized as \emph{Algorithm \ref{A2}}.

\begin{algorithm}[t]
\caption{Outer Polyblock Approximation Algorithm for the MSSR Problem in the Lower Bound Case}\label{A2}
\begin{algorithmic}[1]
\STATE Initialize the polyblock ${B^{\left( 1 \right)}}$ with the vertex set ${V^{\left( 1 \right)}} = \left\{ {{{\bf{u}}^{\left( 1 \right)}}} \right\}$ , where the elements of the ${{\bf{u}}^{\left( 1 \right)}}$ are
\begin{align}
&\;\;\;\;\;\;{u_g} = 1 + {{{\left| {{\bf{h}}_{g,1}^H{{\bf{w}}_g}} \right|}^2}\rho },\;\;\;\;g = 1,2...G.\nonumber
\end{align}
\STATE Set the allowable error tolerance $\delta  \ll 1$ and initialize the iteration index $l = 1$;
\STATE \textbf{Repeat}:
\STATE Construct the smaller polyblock ${B^{\left( {l + 1} \right)}}$ with ${V^{\left( {l + 1} \right)}}$ by replacing ${{\bf{u}}^{\left( l \right)}}$ with $\left\{ {{\bf{\tilde u}}_1^{\left( l \right)},...,{\bf{\tilde u}}_G^{\left( l \right)}} \right\}$. Each ${\bf{\tilde u}}_g^{\left( l \right)}$ is generated by (\ref{e27}), where $\Phi \left( {{{\bf{u}}^{\left( l \right)}}} \right) = \lambda {{\bf{u}}^{\left( l \right)}}$ and $\lambda$ is obtained by \textbf{Algorithm \ref{A1}} to solve the problem (\ref{e26});
\STATE Find ${{\bf{u}}^{\left( {l + 1} \right)}}$ by (\ref{e28})
\STATE Update $l = l + 1$;
\STATE \textbf{Until}:${{\left\| {{{\bf{u}}^{\left( l \right)}} - \Phi \left( {{{\bf{u}}^{\left( l \right)}}} \right)} \right\|} \mathord{\left/
 {\vphantom {{\left\| {{{\bf{u}}^{\left( l \right)}} - \Phi \left( {{{\bf{u}}^{\left( l \right)}}} \right)} \right\|} {\left\| {{{\bf{u}}^{\left( l \right)}}} \right\|}}} \right.
 \kern-\nulldelimiterspace} {\left\| {{{\bf{u}}^{\left( l \right)}}} \right\|}} \le \delta ;$
%\begin{align}
%&\;\;\;\;\;\;\;\frac{{\left\| {{{\bf{u}}^{\left( l \right)}} - \Phi \left( {{{\bf{u}}^{\left( l \right)}}} \right)} \right\|}}{{\left\| {{{\bf{u}}^{\left( l \right)}}} \right\|}} \le \delta; \nonumber
%\end{align}
\STATE ${{\bf{u}}^*} = \Phi \left( {{{\bf{u}}^{\left( l \right)}}} \right)$ and the optimal solution of the problem is obtained by calculating $\Phi \left( {{{\bf{u}}^{\left( l \right)}}} \right)$
\end{algorithmic}
\end{algorithm}

\subsection{AO-based Method for the Upper Bound Case}
The max-sum optimization problem in the upper bound case can be formulated as
\begin{subequations}\label{P5}
\begin{align}
&\mathop {\max }\limits_{{\bf{a}},{\bf{z}}} \sum\limits_{g = 1}^G {\left[ {{{\log }_2}\left( {1 \!+ \!{{\left| {{\bf{h}}_{g,1}^H{{\bf{w}}_g}} \right|}^2}\rho {\alpha _{g,1}}} \right)\! -\! {{\log }_2}\left( {1\! +\! {z_g}} \right)} \right]}\tag{41} \\
&\;\;\textrm{s.t.}\;\;(\textrm{\ref{P4e2}}),~(\textrm{\ref{P4e3}}),~\textrm{and}~(\textrm{\ref{P4e4}}).\notag
\end{align}
\end{subequations}
To deal with the non-convexity of the objective function and the transformed SOP constraint in (\ref{P4e2}) with ${\bf{\Sigma }}_g^{\left( u \right)}$ defined in (\ref{en16}), we propose an AO-based algorithm and decouple problem (\ref{P5}) into the following two sub-problems.

\subsubsection{Sub-problem 1: Fix {\bf{a}} and Optimize {\bf{z}} }
In the $\left( {m + 1} \right)$-iteration, when the optimal ${{\bf{a}}^{\left( m \right)}} = \left\{ {\alpha _{g,k}^{\left( m \right)},g = 1,2,...,G,k = 1,2,...,K} \right\}$ in the $m$-th iteration is obtained, we fix ${{\bf{a}}^{\left( m \right)}}$ and optimize ${\bf{z}} = \left\{ {{z_g},g = 1,2,...,G} \right\}$ to solve the following problem
\begin{subequations}\label{Pn7}
\begin{align}
&\mathop {\min }\limits_{{\bf{z}}} \;\sum\limits_{g = 1}^G {{{\log }_2}\left( {1 + {z_g}} \right)},\ \ s.t. \ (\textrm{\ref{P4e2}}). \tag{42}
\end{align}
\end{subequations}
It can be observed that the $g$-th constraint in (\ref{P4e2}) is related to ${z_g}$ and ${\bf{\Sigma }}_g^{\left( u \right)}$, while ${\bf{\Sigma }}_g^{\left( u \right)}$ defined in (\ref{en16}) is only related to the power allocation policy and ${z_g}$. In addition, there is no constraint that requires the interaction of different ${z_g}$ with each other. Therefore, $G$ constraints in  (\ref{Pn7}) are independent with each other. Since the objective function is the sum of $G$ terms corresponding to ${z_g}$, the optimization of ${\bf{z}}$ can be decoupled into $G$ independent convex optimization problems after exploiting the monotonicity of the logarithmic function.

\subsubsection{Sub-problem 2: Fix {\bf{z}} and Optimize {\bf{a}} }
Based on the obtained value of optimization variables ${{\bf{z}}^{\left( {m + 1} \right)}} = \left\{ {z_g^{\left( {m + 1} \right)},g = 1,2,...,G} \right\}$ in the $\left( {m + 1} \right)$-th iteration, we fix ${{\bf{z}}^{\left( {m + 1} \right)}}$ and optimize ${\bf{a}} = \left\{ {{\alpha _{g,k}},g = 1,2,...,G,k = 1,2,...,K} \right\}$ to solve the following problem
\begin{subequations}\label{Pn10}
\begin{align}
&\mathop {\max }\limits_{\bf{a}} \;\sum\limits_{g = 1}^G {{{\log }_2}\left( {1 + {{\left| {{\bf{h}}_{g,1}^H{{\bf{w}}_g}} \right|}^2}\rho {\alpha _{g,1}}} \right)} \tag{43}\\
&\;\;\textrm{s.t.}\;\;(\textrm{\ref{P4e2}}),~(\textrm{\ref{P4e3}}),~\textrm{and}~(\textrm{\ref{P4e4}}).\notag
\end{align}
\end{subequations}
By introducing the slack variables ${\tau _g},g = 1,2,...,G$, the (\ref{Pn10}) can be transformed as
\begin{subequations}\label{Pn9}
\begin{align}
&\mathop {\max }\limits_{\bf{a},{\tau _g}} \;\sum\limits_{g = 1}^G \;\;{{\tau _g}}  \label{Pn9e1}\\
&\;\;\textrm{s.t.}\;\;\;1 + {\left| {{\bf{h}}_{g,1}^H{{\bf{w}}_g}} \right|^2}\rho {\alpha _{g,1}} \ge {2^{{\tau _g}}} \label{Pn9e2}\\
&\;\;\;\;\;\;\;\;\;(\textrm{\ref{P4e2}}),~(\textrm{\ref{P4e3}}),~\textrm{and}~(\textrm{\ref{P4e4}}),\nonumber
\end{align}
\end{subequations}
which can be easily observed as a convex problem.

Since the objective value obtained by solving the two optimization sub-problems is non-decreasing over iterations, and the optimal value of (\ref{Pn5}) is finite due to the power-limited NOMA system, the iterative solution is guaranteed to finally converges to a stationary solution. By exploiting the similar method to obtain the initial feasible solution in Section \uppercase\expandafter{\romannumeral3}-C, the proposed AO-based algorithm is summarized as \emph{Algorithm \ref{A4}}. It should be pointed out that we can take the same approach as in Section \uppercase\expandafter{\romannumeral3}-D to achieve high tightness of the approximated SOP constraint and solve the problem (\ref{P4}) in the lower and upper bound cases.

\begin{algorithm}[t]
\caption{AO-based Algorithm for the MSSR Problem in the Upper Bound Case}\label{A4}
\begin{algorithmic}[1]
\STATE \textbf{Initialize}: ${{\bf{a}}^{\left( 0 \right)}} = \left\{ {\alpha _{g,k}^{\left( 0 \right)},g = 1,2,...,G,k = 1,2,...,K} \right\}$, $m = 0$;
\STATE \textbf{Repeat}:
\STATE $m = m + 1$;
\STATE Solve problem (\ref{Pn7}) and obtain the optimal solution ${{\bf{z}}^{\left( m \right)}} = \left\{ {z_g^{\left( m \right)},g = 1,2,...,G} \right\}$ with ${{\bf{a}}^{\left( {m - 1} \right)}}$;
\STATE Solve problem (\ref{Pn9}) and obtain the optimal solution ${{\bf{a}}^{\left( m \right)}}$ with ${{\bf{z}}^{\left( m \right)}}$;
\STATE \textbf{Until}: Convergence
\end{algorithmic}
\end{algorithm}

\subsection{Computational Complexity Analysis}
\subsubsection{Lower Bound Case}Based on the complexity of solving a LP, the complexity of the projection is ${\varpi _3} = O\left( {{l_3}{{\left( {G\left( {K + 1} \right) + 1} \right)}^3}} \right)$, where ${{l_3}}$ denotes the number of iterations to calculate the projection. Then, the computational complexity of \emph{Algorithm \ref{A2}} is $O\left( {{\eta}{l_4}\left( {G + 1} \right){\varpi _3}} \right)$, where ${{\eta}}$ denotes the number of bisection iterations to achieve high tightness of the approximation, ${{l_4}}$ denotes the number of outer iterations in \emph{Algorithm \ref{A2}}, and ${\left( {G + 1} \right)}$ denotes the number of vertices that need to calculate the projection in each iteration.

\subsubsection{Upper Bound Case}Similar to Section \uppercase\expandafter{\romannumeral3}-E, the computational complexity of \emph{Algorithm \ref{A4}} is $O\left( {{\eta}{l_5}\left( {G + {{\left( {G\left( {K + 1} \right)} \right)}^2}\left( {G\left( {K + 1} \right) + 1} \right)} \right)} \right)$, where ${{l_5}}$ denotes the number of alternating optimization iterations.

\section{Simulation Results}
In this part, we numerically evaluate the average performance of our designed NOMA scheme through computer simulation. In our considered system, the numbers of the transmit antennas at the BS and user clusters are set as $N = G = 6$, and it is assumed that there are $K = 3$ legitimate users in each cluster without special instructions. The independent channel coefficient vector combined with small-scale fading and large-scale path loss from the BS to the $k$-th legitimate user in the $g$-th cluster is denoted by ${{\bf{h}}_{g,k}} = {{\bf{g}}_{g,k}}{d_{g,k}}^{ - \frac{a}{2}}$, where ${d_{g,k}}$ represents the distance, ${{\bf{g}}_{g,k}}$ denotes the Rayleigh fading channel vector, and $a = 4$ is the path-loss exponent{\footnote{We introduce the large-scale path loss in this section to make our simulation results more practical and reasonable, which has no effect on our aforementioned design of transmission scheme, user scheduling, and the optimization algorithm.}}. To satisfy the user scheduling requirement, the instantaneous ${{\bf{g}}_{g,k}}$ in each cluster is modeled in a similar way as (\ref{en8}). As for the distance between the legitimate users and the BS, we consider three ranges according to their minimum and maximum distances away from the BS and the parameters are set as $\left( {{{\rm{D}}_{{\rm{min}}}},{{\rm{D}}_{{\rm{max}}}}} \right) = \left\{ {\left( {50m,100m} \right),\left( {100m,200m} \right),\left( {200m,300m} \right)} \right\}$. For each cluster, the distances of the $K = 3$ scheduled users are uniformly distributed in these three ranges. Additionally, the distance between the BS and Eve is represented by ${{d_e} = 200m}$. Under this condition, the statistical CSI of Eve can be expressed as ${{\bf{h}}_e} \sim CN\left( {0,{\Gamma _e}{{\bf{I}}_N}} \right)$ where ${\Gamma _e} = d_e^{ - a}$. As for other constant parameters, the average receive noise power at LUs and Eve is set as ${\sigma _N^2} =  - 90 dBm$, the maximum tolerable calculation error and the predefined maximum iteration steps of the Dinkelbach algorithm and the outer polyblock approximation algorithm are both set as ${10^{ - 2}}$ and 50, respectively, and the optimization problem is solved for 100 times with randomly generated channel realizations.

\begin{figure}[t]
\centering
\includegraphics[width=9cm,height=6cm]{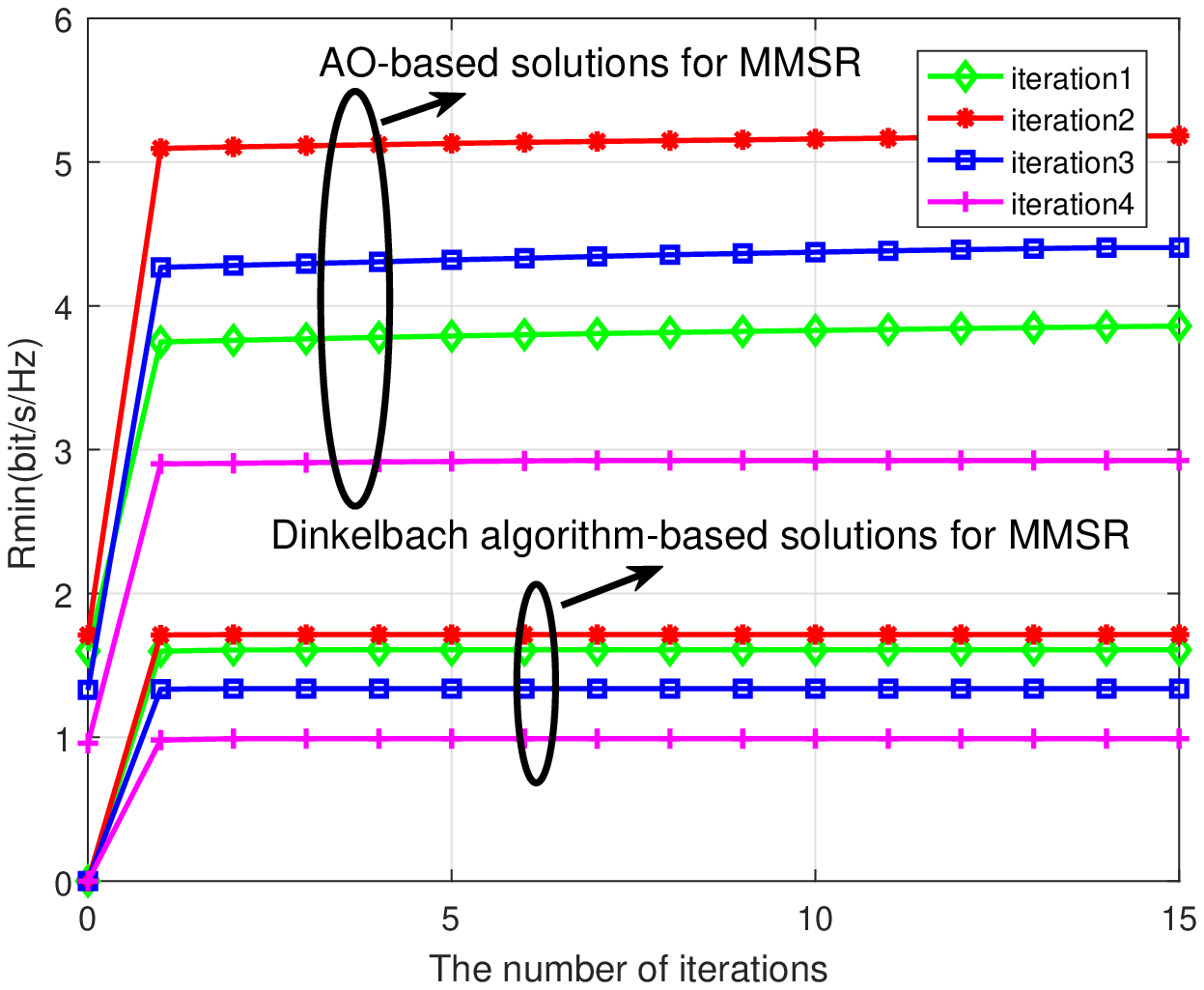}
\includegraphics[width=9cm,height=6cm]{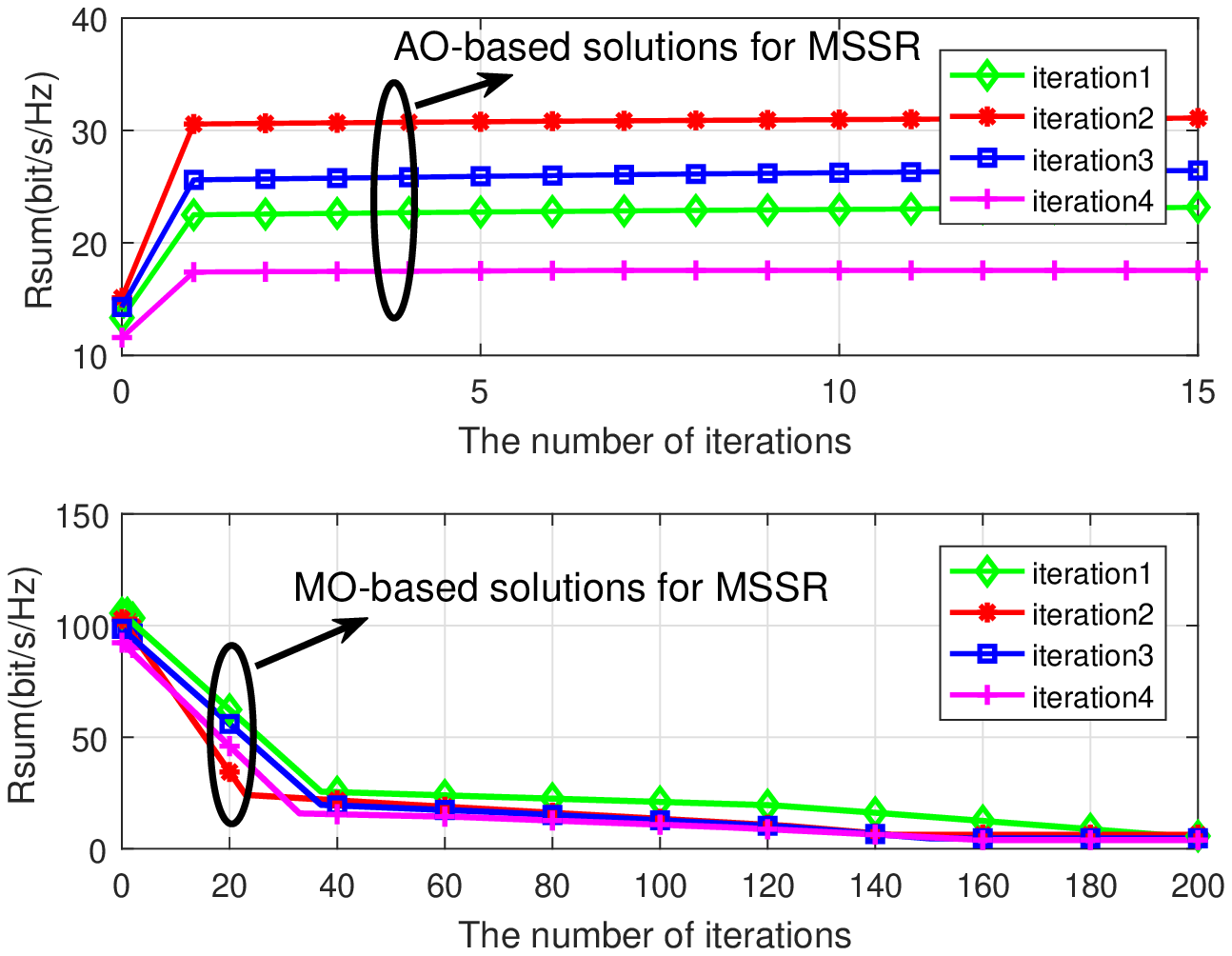}
\caption{Convergence of the proposed algorithms, where the channel correlation coefficient $\phi = 0.9$, $\varepsilon  = {10^{ - 2}}$, and $r = 2$ for all the QUs.}\label{f6}
\end{figure}

\subsection{Convergence of Proposed Algorithms}
Fig. \ref{f6} illustrates the convergence of the proposed algorithms to solve the MMSR and MSSR problems in the upper and lower bound cases. As can be seen from Fig. \ref{f6}, all of the proposed algorithms can converge to the optimal or suboptimal solutions with finite number of iterations. In addition, for each proposed algorithm, we notice that the required number of iterations to converge with randomly generated channel condition does not appear to be much different from each other, which implies that the speed of convergence of each proposed algorithm is not sensitive to the channel parameters. Moreover, we notice that the rate of convergence of the MO-based outer polyblock approximation algorithm is significantly slower than those of the other algorithms, which provides a trade-off between the secrecy performance and the rate of convergence.

It is also observed from Fig. 3 that only one iteration achieves convergence for AO-based and Dinkelbach algorithm-based solutions. For this observation, it should be pointed out that the considered downlink NOMA system is severely interference-limited. In addition, the SOP constraint of each SU and the QoS constraint of each QU are required to be satisfied. Therefore, the feasible regions of the proposed two optimization problems are relatively small. Under this condition, based on the obtained initial solutions, the difference of the objective function values in the two consecutive iterations is relatively small and thus only one iteration almost achieves convergence.

\begin{figure}[t]
\centering
\includegraphics[width=9cm,height=6cm]{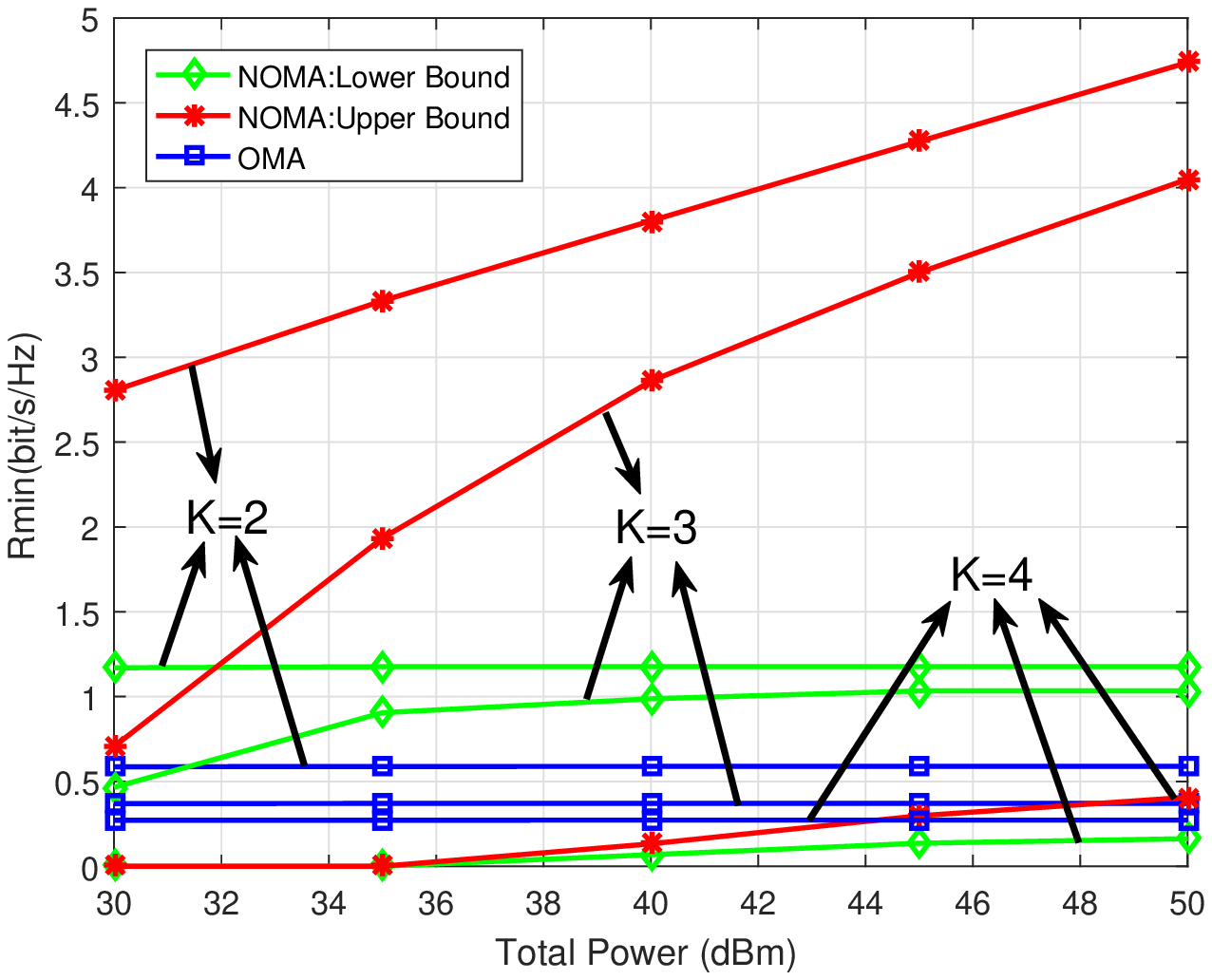}
\includegraphics[width=9cm,height=6cm]{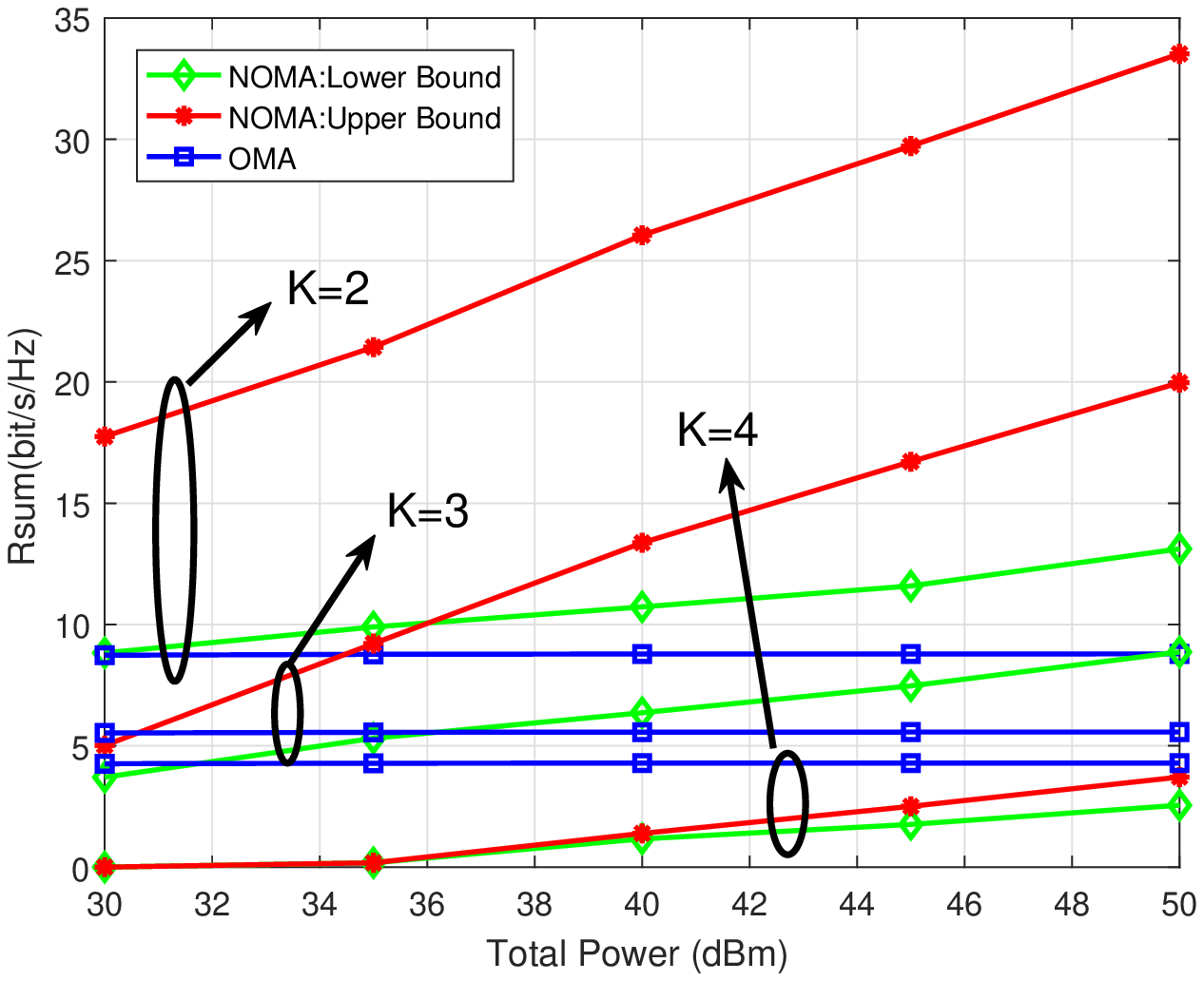}
\caption{Average MSR and SSR of SUs versus the total transmit power at the BS for different numbers of QUs in each cluster, where the channel correlation coefficient $\phi = 0.9$, $\varepsilon  = {10^{ - 2}}$, and $r = 2$ for all the QUs.}\label{f2}
\end{figure}

\subsection{Average Secrecy Rates Versus the Total Transmit Power and the Number of LUs}
In Fig. \ref{f2}, we investigate the average MSR and SSR of SUs versus the total transmit power at the BS for different numbers of QUs in each cluster\footnote{Under the condition that $K=2$, the distances of the scheduled users are uniformly distributed in $\left( {50m,100m} \right)$ and $\left( {100m,200m} \right)$. And under the condition that $K=4$, the distances of the scheduled users are uniformly distributed in $\left( {50m,100m} \right)$, $\left( {100m,200m} \right)$, $\left( {200m,300m} \right)$, and $\left( {300m,400m} \right)$.}, respectively. Under the condition that $K=2$ or $K=3$, it is observed that both our proposed two cases are superior to the conventional OMA one when the transmit power is not too low. Additionally, the secrecy performance of the upper bound case is improved obviously as the available transmit power increases. However, the secrecy performance of the lower bound case is just a little bit better and the secrecy performance of the OMA case is even almost unaffected with the increase of transmit power, which leads to the result that the performance gap between the upper bound case and the other two cases becomes larger. This observation can be reasonably accounted for due to the fact that the performance gap between the upper bound case and the other two cases reflects the impact of interference on the receive SINR of Eve. As the transmit power increases, more transmit power means more potential interference when Eve detects its targeted signal in the upper bound case, which leads to the increasing performance gap.

Furthermore, we notice that the secrecy performance of the conventional OMA case is superior to our proposed two cases when the transmit power is not sufficient high under the condition that $K=4$. This observation is due to the fact that the feasible region is decreased as the number of QUs in each cluster increases. If there is no feasible power allocation policy to satisfy all the QoS requirements, the achievable secrecy rate in this transmission is zero and the average secrecy performance is highly degraded under this condition.

%It is observed that the secrecy performance of each scheme is better with larger transmit power. Moreover, there exists a performance gap between the upper bound case and the lower bound case, and our proposed two cases are superior to the conventional OMA scheme for the both secrecy rate metrics. The performance gap between the upper bound case and the lower bound case reflects the impact of interference on the receive SINR of Eve. Moreover, since more transmit power means more interference when Eve detects its targeted signal, it can be observed from the simulation results that the performance gap between these two cases becomes larger as the transmit power increases.
%As for the performance comparison of our proposed two cases and traditional OMA scheme, it is noted that the QoS requirements of RUs in each cluster play an important role. According to our proposed system model and the formulated optimization problems, the QoS requirements of RUs should be firstly satisfied. Under this condition, only if the transmit power is sufficiently large, the service requirements of RUs in each cluster can be satisfied and the extra power can be exploited for SUs to achieve secrecy performance. If the QoS requirements of RUs are satisfied with a relatively high probability, the average achievable MSR and SSR of our proposed scheme can be larger than that of traditional OMA scheme.

\begin{figure}[t]
\centering
\includegraphics[width=9cm,height=6cm]{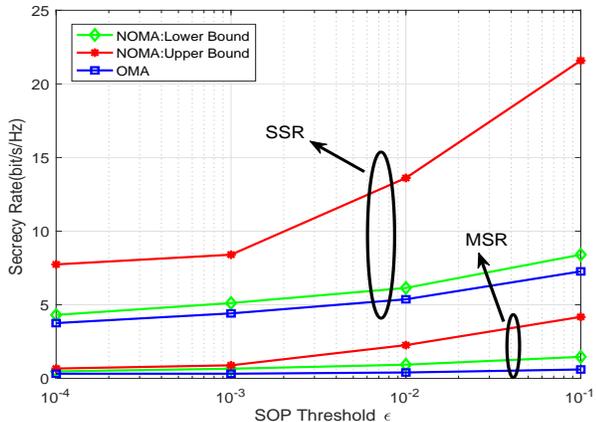}
\caption{Average MSR and SSR of SUs versus the maximum allowable SOP threshold, where the channel correlation coefficient $\phi = 0.9$, $P = 40dBm$, and $r = 2$ for all the QUs.}\label{f3}
\end{figure}

%\begin{figure}[htb]
%\centering
%\includegraphics[width=9cm,height=6cm]{Max_Sum_SOP_Variable.eps}
%\caption{Average SSR of SUs versus the maximum allowable SOP threshold, where the channel correlation coefficient $\phi = 0.9$, $P = 40dBm$, and $r = 2$ for all the QUs.}\label{f5}
%\end{figure}

\subsection{Average Secrecy Rates Versus the Required SOP Threshold}
Fig. \ref{f3} depicts the average MSR and SSR of SUs versus the maximum allowable SOP threshold $\varepsilon $, respectively. It is observed that our proposed two cases are superior to the conventional OMA one for the both secrecy rate metrics. Additionally, since the allowable SOP threshold affects the redundancy term ${\log _2}\left( {1 + {z_g}} \right)$ in the objective function as described in Section \uppercase\expandafter{\romannumeral3}-A, the secrecy performance of each case becomes better with more relaxed SOP requirements. Moreover, it is revealed according to the simulation results that the performance gap between the three cases becomes larger with more relaxed SOP requirements.

%Moreover, there exists a performance gap between the upper bound case and the lower bound case, as well as the lower bound case and the conventional OMA scheme. It should be noted that the performance gain with more relaxed SOP requirements of each scheme is relatively low, since the allowable SOP threshold only affects the redundancy term of the secrecy rate in the objective function as described in Section \uppercase\expandafter{\romannumeral2}-C while has no effect on the feasible region. According to this observation, it is revealed that our proposed NOMA scheme can be applied even when the maximum allowable SOP threshold is relatively small.

\begin{figure}[t]
\centering
\includegraphics[width=9cm,height=6cm]{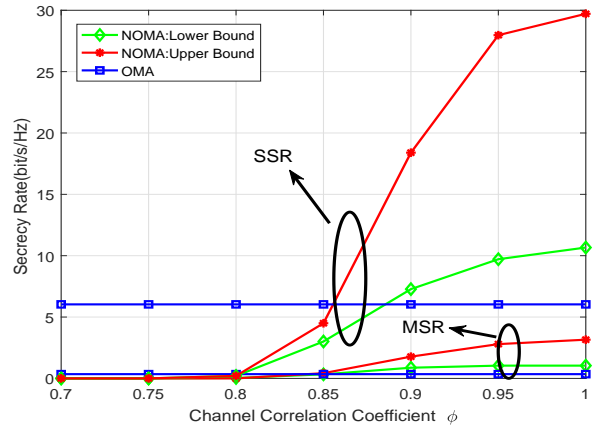}
\caption{Average MSR and SSR of SUs versus the channel correlation coefficient, where $\varepsilon  = {10^{ - 2}}$, $P = 40dBm$, and $r = 2$ for all the QUs.}\label{f4}
\end{figure}

%\begin{figure}[htb]
%\centering
%\includegraphics[width=9cm,height=6cm]{Max_Sum_CSC_Variable.eps}
%\caption{Average SSR of SUs versus the channel correlation coefficient, where $\varepsilon  = {10^{ - 2}}$, $P = 40dBm$, and $r = 2$ for all the QUs.}\label{f7}
%\end{figure}

\subsection{Average Secrecy Rates Versus the Channel Correlation Coefficient}
Fig. \ref{f4} depicts the average MSR and SSR of the SUs versus the channel correlation coefficient $\phi$ in (\ref{en10}) and (\ref{en8}), respectively. It is observed that the secrecy performance of the conventional OMA case is unaffected, which can be easily explained by the fact that there is no inter-cluster interference due to the orthogonal characteristics. It can also be observed that if $\phi$ is relatively low, the secrecy performances of our proposed two NOMA cases are worse than that of the conventional OMA scheme and becomes better with the increase of $\phi$, due to the fact that the inter-cluster interference for the QUs can be more sufficiently eliminated. Under this condition, the QoS requirements of the QUs can be satisfied with higher probability and the extra transmit power can be exploited to improve the secrecy performance. According to this observation, it is pointed out that the efficient user scheduling is important in our proposed NOMA scheme, since the achievable secrecy rate falls to zero if the channel correlation coefficient is relatively small.

\begin{figure}[t]
\centering
\includegraphics[width=9cm,height=6cm]{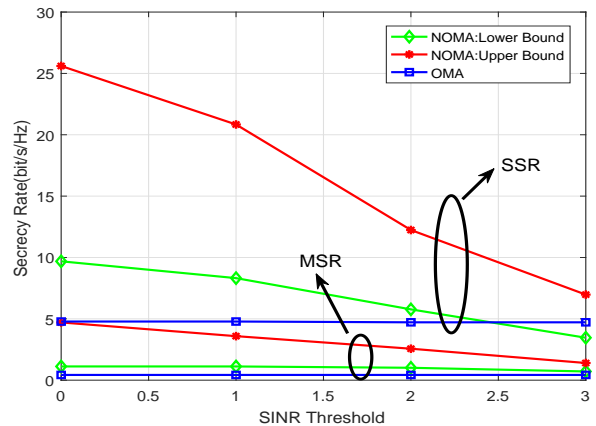}
\caption{Average MSR and SSR of SUs versus the required SINR threshold, where $\varepsilon  = {10^{ - 2}}$, $P = 40dBm$, and the channel correlation coefficient $\phi = 0.9$. }\label{f5}
\end{figure}

%\begin{figure}[htb]
%\centering
%\includegraphics[width=9cm,height=6cm]{Max_Sum_SINR_Variable.eps}
%\caption{Average SSR of SUs versus the required SINR threshold, where $\varepsilon  = {10^{ - 2}}$, $P = 40dBm$, and the channel correlation coefficient $\phi = 0.9$. }\label{f9}
%\end{figure}

\subsection{Average Secrecy Rates Versus the Required SINR Threshold}
Fig. \ref{f5} depicts the average MSR and SSR of the SUs versus the required SINR threshold, respectively. It is observed that the secrecy performance of the conventional OMA case is unaffected, which can be easily explained by the fact that the secrecy performance of SUs is independent with the QoS requirements of the QUs due to the orthogonal characteristics of the OMA scheme. It can also be observed that the secrecy performances of our proposed two NOMA cases are better than that of the conventional OMA case but gradually becomes worse as the SINR threshold increases. This observation can be explained in two aspects. On one hand, the QoS requirements of the QUs are more difficult to satisfy with the increase of the SINR threshold and thereby there may not be extra transmit power for the SUs to achieve secrecy performance. On the other hand, if the QoS requirements of the QUs are satisfied, the increase of the SINR threshold leads to more power allocated to the QUs and less power allocated to the SUs. Therefore, the secrecy rate performance is degraded. According to this observation, low-rate service users, such as IoT receivers, are more suitable to act as the QUs in our proposed NOMA cases.

\section{Conclusion}
In this paper, we have comprehensively investigated and optimized both the lower and upper bounds of the secrecy performance in a downlink MISO NOMA system with legitimate users categorized as security-required and QoS-required ones. A dynamic user scheduling and grouping strategy has been proposed and efficient optimization algorithms have been adopted to achieve the globally optimal and sub-optimal solutions for the MMSR and MSSR problems in the lower and upper bound cases. It has been shown that the proposed NOMA cases can have sufficient superiority over the conventional OMA one in the practical scenario, and some key parameters, such as the QoS requirements and the required threshold for channel correlation coefficients, can significantly affect the secrecy performance of the proposed NOMA cases.
%In this paper, we proposed a user-clustering based scheme in the downlink NOMA system including confidential message communications. To alleviate the influence of inter-cluster interference, the zero-forcing beamforming and the channel similarity based user scheduling method were exploited. According to different assumptions on the external eavesdropper's detection ability, both the lower bound and upper bound cases were studied. To solve the MMSR and MSSR problems, we utilized the efficient Dinkelbach algorithm and the outer polyblock approximation algorithm to obtain the globally optimal solution for the lower bound case, while an alternative optimization (AO) based algorithm was exploited to obtain a sub-optimal solution in the upper bound case. Finally, numerical results validated that our proposed NOMA scheme has the superior secrecy rate performance over the conventional OMA scheme. It is worth pointing out that the simulation results is also helpful to analyze the impacts of key variables on the performance of our proposed NOMA scheme and to provide references for the practical application.

%\renewcommand{\appendixname}{Appendix~\Alph{section}}
\begin{appendices}

\section{Proof of Lemma 1}
Since ${\bf{Q}} \in {{\rm{C}}^{N \times N}}$ is a Hermite matrix, the eigenvalue decomposition of ${\bf{Q}}$ can be expressed as ${\bf{Q}} = {{\bf{U}}^H}{\bf{\Lambda U}}$, where ${\bf{U}}$ is the unitary matrix and ${\bf{\Lambda}}$ is the diagonal matrix consisting of the eigenvalues of the matrix ${\bf{Q}}$. It is noted that since ${\bf{Q}}$ is a Hermite matrix, the eigenvalues of ${\bf{Q}}$ are all real numbers. Then, ${\bf{G}} = {{\bf{h}}^H}{\bf{Qh}}$ can be reformulated as
\begin{equation}\label{EquAppendix1}
    {\bf{G}} = {{\bf{h}}^H}{{\bf{U}}^H}{\bf{\Lambda Uh}} = {\left( {{\bf{Uh}}} \right)^H}{\bf{\Lambda }}\left( {{\bf{Uh}}} \right).
\end{equation}
Since ${\bf{U}}$ is the unitary matrix, ${\bf{\hat h}} = {\bf{Uh}}$ has the same distribution as ${\bf{h}}$ which means ${\bf{\hat h}}\sim CN\left( {0,{{\bf{I}}_N}} \right)$. Then based on (\ref{EquAppendix1}), ${\bf{G}}$ can be further transformed as
\begin{equation}\label{EquAppendix2}
    {\bf{G}} = \sum\limits_{n = 1}^N {{q_n}\left( {\hat h_{n,R}^2 + \hat h_{n,I}^2} \right)},
\end{equation}
where ${{q_n}}$ denotes the $n$-th real eigenvalue of the matrix ${\bf{Q}}$ and ${{\hat h}_{n,R}}$ and ${{\hat h}_{n,I}}$ represent the real and imaginary parts of the $n$-th element of ${\bf{\hat h}}$, respectively. According to the properties of circular symmetric complex gaussian distribution, ${{\hat h}_{n,R}}$ and ${{\hat h}_{n,I}}$ are independent with each other and both of them satisfy the distribution of $N\left( {0,1/2} \right)$. Therefore, we can reformulate ${\bf{G}}$ as
\begin{equation}\label{EquAppendix3}
    {\bf{G}} = \sum\limits_{n = 1}^{2N} {{{\hat q}_n}r_n^2}  = \sum\limits_{n = 1}^{2N} {\frac{1}{2}{{\hat q}_n}{{\left( {\sqrt 2 {r_n}} \right)}^2}},
\end{equation}
where ${r_n}\sim N\left( {0,1/2} \right), n = 1,2,...,2N$ are independent real random gaussian variables and
\begin{align}
&{{\hat q}_n}=\left\{
             \begin{array}{lcl}
             {q_n},\;\;\;\;\;\;\;\;\;n = 1,2,...,N \\
             {q_{n - N}},\;\;\;\;n = N + 1,N + 2,...,2N.
             \end{array}
        \right.\label{EquAppendix4}
\end{align}
Finally by applying Lemma (0.1) in \cite{Reference16} and exploiting the relations that $\sum\nolimits_{n = 1}^N {{q_n}}  = Tr({\bf{Q}})$ and $\sum\nolimits_{n = 1}^N {q_n^2}  = \left\| {\bf{Q}} \right\|_F^2$, \emph{Lemma 1} is proved.

%\end{appendix}

%\begin{appendix}
\section{Mathematical Preliminaries}

In this appendix, we will provide some basic preliminaries of monotonic optimization and show the hidden monotonicity of GLFP \cite{ReferenceMonotonic}.
\begin{myDef}\label{Definition1}
  \textnormal{(\textbf{Increasing functions}) A function $f:R_ + ^n \to R$ is increasing if $f\left( {\bf{x}} \right) \le f\left( {\bf{y}} \right)$ when $0 \le {\bf{x}} \le {\bf{y}}$.}
\end{myDef}
\begin{myDef}\label{Definition2}
  \textnormal{(\textbf{Boxes}) If ${\bf{a}} \le {\bf{b}}$, then box $\left[ {{\bf{a}},{\bf{b}}} \right]$ is the set of all ${\bf{x}} \in {R^n}$ satisfying ${\bf{a}} \le {\bf{x}} \le {\bf{b}}$.}
\end{myDef}
% A box is also referred to as a hyper-rectangle.
\begin{myDef}\label{Definition3}
  \textnormal{(\textbf{Normal sets}) A set $\Omega  \subset R_ + ^n$ is normal if for any point ${\bf{x}} \in \Omega $, all other points ${{\bf{y}}}$ such that ${\bf{0}} \le {{\bf{y}}} \le {\bf{x}}$ are also in set $\Omega $. In other words, $\Omega  \subset R_ + ^n$ is normal if ${\bf{x}} \in \Omega  \Rightarrow \left[ {{\bf{0}},{\bf{x}}} \right] \subset \Omega $.}
\end{myDef}
\begin{myDef}\label{Definition4}
  \textnormal{(\textbf{Conormal sets}) A set $\Psi $ is conormal if ${\bf{x}} \in \Psi $ and ${{\bf{y}}} \ge {\bf{x}}$ implies ${{\bf{y}}} \in \Psi $. The set is conormal in $\left[ {{\bf{0}},{\bf{b}}} \right]$ if ${\bf{x}} \in \Psi  \Rightarrow \left[ {{\bf{x}},{\bf{b}}} \right] \subset \Psi $.}
\end{myDef}
\begin{myDef}\label{Definition5}
  \textnormal{(\textbf{Normal hull}) The normal hull of a set $\Phi  \subset R_ + ^n$ is the smallest normal set containing $\Phi $. Mathematically, the normal hull is given by $N\left( \Phi  \right) = { \cup _{{\bf{Z}} \in \Phi }}\left[ {{\bf{0}},{\bf{Z}}} \right]$.}
\end{myDef}
\begin{myDef}\label{Definition6}
  \textnormal{(\textbf{Canonical monotonic optimization formulation}) Monotonic Optimization is concerned with problems of the form $\max \left\{ {\left. {f\left( {\bf{x}} \right)} \right|{\bf{x}} \in \Omega  \cap \Psi } \right\}$, where $f\left( {\bf{x}} \right):R_ + ^n \to R$ is an increasing function, $\Omega  \subset \left[ {{\bf{0}},{\bf{b}}} \right] \subset R_ + ^n$ is a compact normal set with nonempty interior, and $\Psi $ is a closed conormal set on$\left[ {{\bf{0}},{\bf{b}}} \right]$.}
\end{myDef}
\begin{myDef}\label{Definition7}
  \textnormal{(\textbf{Upper boundary}) A point ${{\bf{\bar x}}}$ of a normal closed set $\Omega $ is called an upper boundary point of $\Omega $ if $\Omega  \cap \left\{ {\left. {{\bf{x}} \in R_ + ^n} \right|{\bf{x}} > {\bf{\bar x}}} \right\} = \emptyset $. The set of all upper boundary points of $\Omega $ is called its upper boundary and denoted by ${\partial ^ + }\Omega $.}
\end{myDef}
\begin{myDef}\label{Definition8}
  \textnormal{(\textbf{Polyblocks}) A set ${\rm{P}} \subset R_ + ^n$ is called a polyblock if it is a union of a finite number of boxes $\left[ {{\bf{0}},{\bf{Z}}} \right]$, where ${\bf{Z}} \in \Gamma $ and $\left| \Gamma  \right| <  + \infty $. The set $\Gamma $ is the vertex set of the polyblock.}
\end{myDef}
\begin{myDef}\label{Definition9}
  \textnormal{(\textbf{Proper vertices of a polyblock}) Let $\Gamma $ be the vertex set of a polyblock ${\rm{P}} \subset R_ + ^n$. A vertex ${\bf{v}} \in \Gamma $ is said to be proper if there is no ${\bf{\hat v}} \in \Gamma $ such that ${\bf{\hat v}} \ne {\bf{v}}$ and ${\bf{\hat v}} \ge {\bf{v}}$. A vertex is said to be improper if it is not proper. Improper vertices can be removed from the vertex set without affecting the shape of the polyblock.}
\end{myDef}

In general, the constraints derived from practical systems may result in an arbitrarily shaped feasible set. Then the following proposition shows that this kind of problem can still be formulated into the canonical form as long as $f\left( {\bf{x}} \right)$ is increasing.
\begin{pp}\label{PP1}
\textnormal{If $\Phi $ is an arbitrary nonempty compact set on $R_ + ^n$ and $\Omega  = N\left( \Phi  \right)$ is the normal hull of $\Phi $, then the problem $\max \left\{ {\left. {f\left( {\bf{x}} \right)} \right|{\bf{x}} \in \Phi } \right\}$ is equivalent to $\max \left\{ {\left. {f\left( {\bf{x}} \right)} \right|{\bf{x}} \in \Omega } \right\}$.}
\end{pp}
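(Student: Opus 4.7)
The plan is to prove this by establishing two chains of inequalities between the two optimal values, using the inclusion $\Phi \subset \Omega$ in one direction and the monotonicity of $f$ in the other. Since the proof reduces to a simple observation once the definitions are unpacked, the main work is to verify that the hypotheses permit both maxima to be attained.

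First I would note that $\Phi \subset \Omega$ follows immediately from Definition 5: the normal hull $N(\Phi) = \bigcup_{{\bf{z}} \in \Phi}[{\bf{0}}, {\bf{z}}]$ contains every ${\bf{z}} \in \Phi$, because ${\bf{z}} \in [{\bf{0}}, {\bf{z}}]$. Consequently $\max_{{\bf{x}} \in \Phi} f({\bf{x}}) \le \max_{{\bf{x}} \in \Omega} f({\bf{x}})$, provided both sides are well defined.

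For the reverse inequality, I would pick any ${\bf{x}} \in \Omega$ and use the definition of the normal hull to produce some ${\bf{z}} \in \Phi$ with ${\bf{0}} \le {\bf{x}} \le {\bf{z}}$. Invoking Definition 1, the monotonicity of $f$ gives $f({\bf{x}}) \le f({\bf{z}}) \le \max_{{\bf{z}} \in \Phi} f({\bf{z}})$. Taking the supremum over ${\bf{x}} \in \Omega$ completes the chain, showing that the two optimal values coincide, and also that any maximizer over $\Omega$ is dominated componentwise by a point in $\Phi$ that attains the same objective value. This is the essential idea that justifies the reformulation used in the outer polyblock algorithm of Section IV-B.

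The subtlety I expect to be the main obstacle is ensuring that both maxima are actually attained rather than only finite. For the problem on $\Phi$, attainment follows from compactness of $\Phi$ together with the continuity that is typically implicit in the increasing-function assumption. For the problem on $\Omega = N(\Phi)$, one could either argue that $\Omega$ is itself compact (it is bounded because $\Phi$ is, and closed as a union of boxes indexed by a compact set) or simply observe that the reverse-inequality argument already produces a maximizer lying in $\Phi \subset \Omega$. I would take the latter route to keep the proof short, since it sidesteps the mild technicality of verifying closedness of the normal hull and delivers the equivalence directly.
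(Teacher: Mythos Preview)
The paper does not actually supply its own proof of this proposition: in Appendix~B it is stated as a known preliminary result drawn from the monotonic-optimization reference \cite{ReferenceMonotonic}, and the text moves directly on to applying it. Your argument is the standard one for this fact and is correct: the inclusion $\Phi\subset N(\Phi)$ gives one inequality, and for the other you use that every $\mathbf{x}\in N(\Phi)$ lies in some box $[\mathbf{0},\mathbf{z}]$ with $\mathbf{z}\in\Phi$, so monotonicity of $f$ yields $f(\mathbf{x})\le f(\mathbf{z})$. Your handling of the attainment issue---observing that a maximizer over $\Omega$ is dominated by a point of $\Phi$ achieving the same value, so compactness of $\Phi$ alone suffices---is a clean way to avoid checking closedness of the normal hull explicitly.
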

%\begin{pp}\label{PP2}
%The maximum of an increasing function $f\left( {\bf{x}} \right)$ over $\Omega  \cap \Psi $ is attained on ${\partial ^ + }\Omega  \cap \Psi $, where $\Omega $ is a compact normal set and $\Psi $ is a closed conormal set.
%\end{pp}
%\begin{pp}\label{PP3}
%The maximum of an increasing function $f\left( {\bf{x}} \right)$ over a polyblock in ${R_ + ^n}$ is attained at one of its proper vertices.
%\end{pp}

Based on the above definitions, the predefined problem (\ref{GLFP_Problem}) in Section \uppercase\expandafter{\romannumeral4}-B is equivalent to $\max \left\{ {\left. {\phi \left( {\bf{y}} \right)} \right|{\bf{y}} \in {\bf{u}}\left( {\bf{\Delta }} \right)} \right\}$, which can be further written as $\max \left\{ {\left. {\phi \left( {\bf{y}} \right)} \right|{\bf{y}} \in \Omega } \right\}$ by \emph{Proposition \ref{PP1}}, where $\Omega  = N\left( {{\bf{u}}\left( {\bf{\Delta }} \right)} \right) = \left\{ {\left. {{\bf{y}} \in R_ + ^m} \right|{\bf{y}} \le {\bf{u}}\left( {\bf{x}} \right),{\bf{x}} \in {\bf{\Delta }}} \right\}$. Since ${{\bf{u}}\left( {\bf{x}} \right)}$ is continuous on ${\bf{\Delta }}$, ${\bf{u}}\left( {\bf{\Delta }} \right)$ is compact. Thus, its normal hull $\Omega $ is also compact and is contained in box $\left[ {{\bf{0}},{\bf{b}}} \right]$. Furthermore, since all ${u_i}\left( {\bf{x}} \right)$ are positive, $\Omega $ has a nonempty interior. By this, we conclude that the transformed problem is a monotonic optimization problem in the canonical form.

\end{appendices}


\begin{thebibliography}{100}
\bibitem{ReferenceSurvey1}
L. Dai, B. Wang, Y. Yuan, S. Han, I. Chih-Lin, and Z. Wang, ``Non-orthogonal multiple access for 5G: solutions, challenges, opportunities, and future research trends," \emph{IEEE Commun. Mag.}, vol. 53, no. 9, pp. 74--81, September 2015.

\bibitem{ReferenceNOMA}
Y. Saito, Y. Kishiyama, A. Benjebbour, T. Nakamura, A. Li and, K. Higuchi, ``Non-Orthogonal multiple access (NOMA) for cellular future radio access," in \emph{Proc. IEEE Veh. Technol. Conf. (VTC)}, Dresden, Germany, Jun. 2013, pp. 1--5.

\bibitem{ReferenceQoSordering1}
Z. Ding, H. Dai, and H. V. Poor, ``Relay Selection for Cooperative NOMA," \emph{IEEE Wireless Commun. Lett.}, vol. 5, no. 4, pp. 416--419, Aug. 2016.

\bibitem{ReferenceQoSordering2}
Z. Ding, L. Dai, and H. V. Poor, ``MIMO-NOMA design for small packet transmission in the Internet of Things," \emph{IEEE Access}, vol. 4, pp. 1393--1405, 2016.

\bibitem{Reference1}
Z. Ding, Z. Yang, P. Fan, and H. V. Poor, ``On the performance of non-orthogonal multiple access in 5G systems with randomly deployed users," \emph{IEEE Signal Process. Lett.}, vol. 21, no. 12, pp. 1501--1505, Dec. 2014.

\bibitem{Reference2}
Y. Zhang, H.-M. Wang, T.-X. Zheng, and Q. Yang, ``Energy-efficient transmission design in non-orthogonal multiple access," \emph{IEEE Trans. Veh. Technol.}, vol. 66, no. 3, pp. 2852--2857, Mar. 2017.

\bibitem{Reference3}
S. Timotheou and I. Krikidis, ``Fairness for non-orthogonal multiple access in 5G systems," \emph{IEEE Signal Process. Lett.}, vol. 22, no. 10, pp. 1647--1651, Oct. 2015.

\bibitem{ReferencePLS1}
H.-M. Wang, T.-X. Zheng, and X.-G. Xia, ``Secure MISO wiretap channel with multi-antenna passive eavesdroppers: artificial noise vs. artificial fast fading," \emph{IEEE Trans. Wireless Commun.}, vol. 14, no. 1, pp. 94--106, Jan. 2015.

\bibitem{ReferencePLS2}
C. Wang and H.-M. Wang, ``Physical layer security in millimeter wave cellular networks," \emph{IEEE Trans. Wireless Commun.}, vol. 15, no. 8, pp. 5569--5585, Aug. 2016.

\bibitem{ReferencePLS3}
H.-M. Wang, T.-X. Zheng, J. Yuan, D. Towsley, and M. H. Lee, ``Physical layer security in heterogeneous cellular networks," \emph{IEEE Trans. Commun.}, vol. 64, no. 3, pp. 1204--1219, Mar. 2016.

\bibitem{Reference7}
Y. Zhang, H.-M. Wang, Q. Yang, and Z. Ding, ``Secrecy sum rate maximization in non-orthogonal multiple access," \emph{IEEE Commun. Lett.}, vol. 20, no. 5, pp. 930--933, May 2016.

\bibitem{ReferenceSISO1}
Z. Qin, Y. Liu, Z. Ding, Y. Gao, and M. Elkashlan, ``Physical layer security for 5G non-orthogonal multiple access in large-scale networks," in \emph{Proc. Int. Commun. Conf. (ICC)}, Kuala Lumpur, Malaysia, May 2016, pp. 1--6.

\bibitem{ReferenceSISO2}
B. He, A. Liu, N. Yang, and V. K. N. Lau, ``On the design of secure non-orthogonal multiple access systems," \emph{IEEE J. Sel. Areas Commun.}, vol. 35, no. 10, pp. 2196--2206, Oct. 2017.

\bibitem{ReferenceSISO3}
J. Chen, L. Yang, and M. S. Alouini, ``Physical layer security for cooperative NOMA systems,"
\emph{IEEE Trans. Veh. Technol.}, vol. 67, no. 5, pp. 4645--4649, May 2018.

\bibitem{ReferenceSISO4}
H. Zhang, N. Yang, K. Long, M. Pan, G. K. Karagiannidis, and V. C. M. Leung, ``Secure communications in NOMA system: Subcarrier assignment and power allocation," \emph{IEEE J. Sel. Areas Commun.}, doi: 10.1109/JSAC.2018.2825559.

\bibitem{ReferenceSISO5}
M. Qin, S. Yang, H. Deng, and M. H. Lee, ``Enhancing security of primary user in underlay cognitive radio networks with secondary user selection," \emph{IEEE Access}, vol. 6, pp. 32624--32636, 2018.

\bibitem{ReferenceMIMO1}
Y. Li, M. Jiang, Q. Zhang, Q. Li, and J. Qin, ``Secure beamforming in downlink MISO nonorthogonal multiple access systems," \emph{IEEE Trans. Veh. Technol.}, vol. 66, no. 8, pp. 7563--7567, Aug. 2017.

\bibitem{ReferenceMIMO2}
M. Tian, Q. Zhang, S. Zhao, Q. Li, and J. Qin, ``Secrecy sum rate optimization for downlink MIMO nonorthogonal multiple access systems," \emph{IEEE Signal Process. Lett.}, vol. 24, no. 8, pp. 1113--1117, Aug. 2017.

\bibitem{ReferenceMIMO3}
M. Jiang, Y. Li, Q. Zhang, Q. Li, and J. Qin, ``Secure beamforming in downlink MIMO nonorthogonal multiple access networks," \emph{IEEE Signal Process. Lett.}, vol. 24, no. 12, pp. 1852--1856, Dec. 2017.

\bibitem{ReferenceMIMO4}
N. Nandan, S. Majhi, and H. C. Wu, "Secure beamforming for MIMO-NOMA based cognitive radio network," \emph{IEEE Commun. Lett.}, vol. 22, no. 8, pp. 1708--1711, Aug. 2018.

%\bibitem{ReferenceMIMO5}
%F. Zhou, Z. Chu, H. Sun, R. Q. Hu and L. Hanzo, ``Artificial noise aided secure cognitive beamforming for cooperative MISO-NOMA using SWIPT," \emph{IEEE J. Sel. Areas Commun.}, vol. 36, no. 4, pp. 918-931, April 2018

%\bibitem{ReferenceMIMO6}
%Y. Sun, D. W. K. Ng, J. Zhu and R. Schober, ``Robust and secure resource allocation for full-duplex MISO multicarrier NOMA systems," \emph{IEEE Trans. Commun.}, to be published.

\bibitem{ReferenceMIMO8}
Z. Ding, Z. Zhao, M. Peng, and H. V. Poor, ``On the spectral efficiency and security enhancements of NOMA assisted multicast-unicast streaming," \emph{IEEE Trans. Commun.}, vol. 65, no. 7, pp. 3151--3163, Jul. 2017.

\bibitem{ReferenceMIMO9}
Z. Yang, J. A. Hussein, P. Xu, Z. Ding, and Y. Wu, ``Power allocation study for non-orthogonal multiple access networks with multicast-unicast transmission," \emph{IEEE Trans. Wireless Commun.}, vol. 17, no. 6, pp. 3588--3599, Jun. 2018.

\bibitem{ReferenceMIMO10}
H. Lei, J. Zhang, K.-H. Park, P. Xu, I. S. Ansari, G. Pan, B. Alomair, and M.-S. Alouini, ``On secure NOMA systems with transmit antenna selection schemes," \emph{IEEE Access}, vol. 5, pp. 17450--17464, 2017.

\bibitem{ReferenceMIMO11}
H. Lei, J. Zhang, K.-H. Park, P. Xu, Z. Zhang, G. Pan, and M.-S. Alouini, ``Secrecy outage of max-min TAS scheme in MIMO-NOMA systems," \emph{IEEE Trans. Veh. Technol.}, vol. 67, no. 8, pp. 6981--6990, Aug. 2018.

\bibitem{ReferenceMIMO12}
L. Lv, Z. Ding, Q. Ni, and J. Chen, ``Secure MISO-NOMA transmission with artificial noise," \emph{IEEE Trans. Veh. Technol.}, vol. 67, no. 7, pp. 6700--6705, Jul. 2018.

%\bibitem{ReferenceMIMO13}
%B. Zheng, M. Wen and C. Wang, ``Secure NOMA based two-way relay networks using artificial noise and full duplex,"
%\emph{IEEE J. Sel. Areas Commun.}, to be published.

\bibitem{ReferenceMIMO7}
Y. Liu, Z. Qin, M. Elkashlan, Y. Gao, and L. Hanzo, ``Enhancing the physical layer security of non-orthogonal multiple access in large-scale networks," \emph{IEEE Trans. Wireless Commun.}, vol. 16, no. 3, pp. 1656--1672, Mar. 2017.

%\bibitem{ReferenceMIMO14}
%Y. Feng, S. Yan, Z. Yang, N. Yang and J. Yuan, ``Beamforming design and power allocation for secure transmission with NOMA," arXiv:1806.09421 [cs.IT]

\bibitem{ReferenceMIMO15}
X. Chen, Z. Zhang, C. Zhong, D. W. K. Ng, and R. Jia, ``Exploiting inter-user interference for secure massive non-orthogonal multiple access," \emph{IEEE J. Sel. Areas Commun.}, vol. 36, no. 4, pp. 788--801, Apr. 2018.

\bibitem{ReferenceSDMA}
Q. Yang, H. M. Wang, D. W. K. Ng, and M. H. Lee, ``NOMA in downlink SDMA with limited feedback: Performance analysis and optimization," \emph{IEEE J. Sel. Areas Commun.}, vol. 35, no. 10, pp. 2281--2294, Oct. 2017.

\bibitem{ReferenceZFBF}
Taesang Yoo and A. Goldsmith, ``On the optimality of multiantenna broadcast scheduling using zero-forcing beamforming," \emph{IEEE J. Sel. Areas Commun.}, vol. 24, no. 3, pp. 528--541, Mar. 2006.

%\bibitem{ReferenceSurvey2}
%Z. Ding, Y. Liu, J. Choi, Q. Sun, M. Elkashlan and H. V. Poor, ``Application of non-orthogonal multiple access in LTE and 5G networks," \emph{IEEE Commun. Mag.}, vol. 55, no. 2, pp. 185-191, February 2017.

\bibitem{ReferenceOptimal1}
Z. Ding, P. Fan, and H. V. Poor, ``Impact of user pairing on 5G nonorthogonal multiple-access downlink transmissions," \emph{IEEE Trans. Veh. Technol.}, vol. 65, no. 8, pp. 6010--6023, Aug. 2016.

\bibitem{ReferenceOptimal2}
L. Zhu, J. Zhang, Z. Xiao, X. Cao, and D. O. Wu, ``Optimal user pairing for downlink non-orthogonal multiple access (NOMA),'' \emph{IEEE Wireless Commun. Lett.}, to be published.

\bibitem{ReferenceConvex}
S. Boyd and L. Vandenberghe, \emph{Convex Optimization.} Cambridge University Press, 2004.

\bibitem{Reference16}
I. Bechar, ``A Bernstein-type inequality for stochastic processes of quadratic forms of Gaussian variables," Sep 2009, available online: http://arxiv.org/abs/0909.3595.

\bibitem{Reference6}
B. Wang and P. Mu, ``Artificial noise-aided secure multicasting design under secrecy outage constraint,"
\emph{IEEE Trans. Commun.}, vol. 65, no. 12, pp. 5401--5414, Dec. 2017.

\bibitem{Reference14}
W. Dinkelbach, ``On nonlinear fractional programming," \emph{Manage. Sci.}, vol. 13, no. 7, pp. 492¨C-498, Mar. 1967. [Online]. Available: http://www.jstor.org/stable/2627691.

\bibitem{Reference17}
M. Grant and S. Boyd. (Mar. 2014). \emph{CVX: Matlab Software for Disciplined Convex Programming, Version 2.1,} [Online]. Available: http://cvxr.com/cvx.

\bibitem{ReferenceGaussian}
T. Al-Naffouri and B. Hassibi, ``On the distribution of indefinite quadratic forms in Gaussian random variables," in \emph{Proc. IEEE Int. Symp. Inf. Theory}, Seoul, South Korea, Jun. 2009, pp. 1744--1748.

\bibitem{ReferenceMonotonic}
Y. J. A. Zhang, L. Qian, and J. Huang, ``Monotonic optimization in communication and networking systems," \emph{Found Trends Netw.}, vol. 7, no. 1, pp. 1--75, Oct. 2013.

\bibitem{Reference15}
Y. Sun, D. W. K. Ng, Z. Ding, and R. Schober, ``Optimal joint power and subcarrier allocation for full-duplex multicarrier non-orthogonal multiple access systems," \emph{IEEE Trans. Commun.}, vol. 65, no. 3, pp. 1077--1091, Mar. 2017.

%\bibitem{Reference18}
%Y. Zhang, H.-M. Wang, Z. Ding, and M. H. Lee, ``Non-orthogonal multiple access assisted multi-region geocast," \emph{IEEE Access}, vol. 6, pp. 2340--2355, 2018.

\end{thebibliography}
\end{document}